\DeclareMathAlphabet\mathbfcal{OMS}{cmsy}{b}{n}
\newcommand{\ket}[1]{\ensuremath{|#1\rangle}}
\newcommand{\bra}[1]{\ensuremath{\langle #1|}}
\newcommand{\proj}[1]{\ket{#1}\bra{#1}}
\newcommand{\tr}{{\rm tr}}
\newcommand{\norm}[1]{\left\|#1\right\|}
\newcommand{\id}{\mathbb{I}}
\newcommand{\Perrorstate}{{\text{\bf P}_{\rm error}^{\rm st\text{-}exc}}}
\newcommand{\Perrorclass}{{\text{\bf P}_{\rm error}^{\rm cl\text{-}exc}}}
\newcommand{\Perrorens}{{\text{\bf P}_{\rm error}^{\rm ens\text{-}exc}}}
\newcommand{\perrorens}{{p_{\rm error}^{\rm ens\text{-}exc}}}
\newcommand{\cardin}{{n}}
\newcommand{\cardout}{{w}}
\newcommand{\stEns}{{\mathbfcal{S}}}
\newcommand{\prob}{{{\bm p}}}
\newcommand{\permu}{{{\bm\pi}}}
\newcommand{\averho}{{\rho^{\rm ave}_x}}
\newcommand{\Nset}{{\text{\bf N}}}
\newcommand{\PerrorstateTheta}{{\text{\bf P}_{\rm error}^{\rm st\text{-}exc (\Theta)}}}
\newcommand{\probunif}{{{\bm p}^{\rm uni}}}
\newcommand{\PerrorensTheta}{{\text{\bf P}_{\rm error}^{\rm ens\text{-}exc (\Theta)}}}
\newcommand{\perrorencrypt}{{p_{\rm error}^{\rm encrypt}}}
\newcommand{\Perrorencrypt}{{\text{\bf P}_{\rm error}^{\rm encrypt}}}
\newtheorem{result}{Theorem}
\newtheorem{alemma}[result]{Lemma}
\newtheorem{atheorem}[result]{Proposition}
\newtheorem{acorollary}[result]{Corollary}
\newtheorem{maincoro}[result]{Corollary}
\newtheorem{question}{Question}
\newcommand{\Q}[1]{{\begin{center}\color{purple}\fbox{\begin{minipage}{.95\columnwidth}\small {\color{purple}\begin{question}{\rm #1}\end{question}}\end{minipage}}\end{center}}}
\definecolor{nred}{rgb}{0.9,0.1,0.1}
\definecolor{nblack}{rgb}{0,0,0}
\definecolor{nblue}{rgb}{0.2,0.2,0.8}
\definecolor{ngreen}{rgb}{0.2,0.6,0.2}
\definecolor{ublue}{rgb}{0,0,0.5}
\definecolor{pur}{rgb}{0.75,0,0.75}
\definecolor{nngrn}{rgb}{0,0.5,0.5}
\definecolor{CitingColor}{rgb}{0,0.3,1}
\newcommand{\blu}{\color{nblue}}
\begin{document}
\title{Quantum complementarity: A novel resource for unambiguous exclusion and encryption}

\author{Chung-Yun Hsieh}
\email{chung-yun.hsieh@bristol.ac.uk}
\affiliation{H. H. Wills Physics Laboratory, University of Bristol, Tyndall Avenue, Bristol, BS8 1TL, UK}

\author{Roope Uola}
\affiliation{Department of Applied Physics, University of Geneva, 1211 Geneva, Switzerland}

\author{Paul Skrzypczyk}
\affiliation{H. H. Wills Physics Laboratory, University of Bristol, Tyndall Avenue, Bristol, BS8 1TL, UK}
\affiliation{CIFAR Azrieli Global Scholars program, CIFAR, Toronto, Canada}

\date{\today}

\begin{abstract}
{\em Complementarity} is a phenomenon explaining several core features of quantum theory, such as the well-known uncertainty principle.
Roughly speaking, two objects are said to be complementary if being certain about one of them necessarily forbids useful knowledge about the other.
Two quantum measurements that do not commute form an example of complementary measurements, and this phenomenon can also be defined for ensembles of states.
Although a key quantum feature, it is unclear whether complementarity can be understood more operationally, as a {\em necessary} resource in some quantum information task. Here we show this is the case, and relates to a novel task which we term {\em $\eta$-unambiguous exclusion}. 
As well as giving complementarity a clear operational definition, this also uncovers the foundational underpinning of unambiguous exclusion tasks for the first time. We further show that a special type of measurement complementarity is equivalent to advantages in certain {\em encryption tasks}. Finally, our analysis suggest that complementarity of measurement and state ensemble can be interpreted as strong forms of measurement incompatibility and quantum steering, respectively.
\end{abstract}

\maketitle

Complementarity is  one of the most important features of quantum theory.
Conceptually, it states that being certain about one property completely blocks us from having or obtaining information about another property.
The most well-known facet of this phenomenon is the uncertainty principle~\cite{Busch2014RMP}.
This is one of the fundamental differences between quantum and classical measurements, with quantum observables not commuting in general, such that two properties can, quantum mechanically, be mutually exclusive, i.e., {\em complementary} to each other~\cite{Incom-review}. 
For instance, in a single electron system, spin measurement of two orthogonal components, such as $S_x$ and $S_z$, form a pair of complementary measurements. 
If we are certain about the outcome of $S_x$, then the outcome of $S_z$ is necessarily completely random. 

Complementarity can also be viewed as a phenomenon of {\em sets of quantum states}.
For instance, two bases of pure states can be mutually exclusive, i.e., no common information is shared between these bases~\cite{Durt2010}.
While being a foundational underpinning of quantum theory, it is an open question whether quantum complementarity can be equivalently defined and understood more operationally, through an appropriate quantum information task, or tasks.
A clear answer to this question would not only bridge a fundamental quantum feature with quantum information science, but also identify quantum complementarity as a novel type of quantum resource,  {\em necessary} in certain operational applications.
This work aims to fully explore and address the operational relevance of quantum complementarity, and so show that it in fact has an operational significance in the context of a novel task ---  which we term {\em unambiguous exclusion}.

\begin{figure*}
	\scalebox{1.0}{\includegraphics{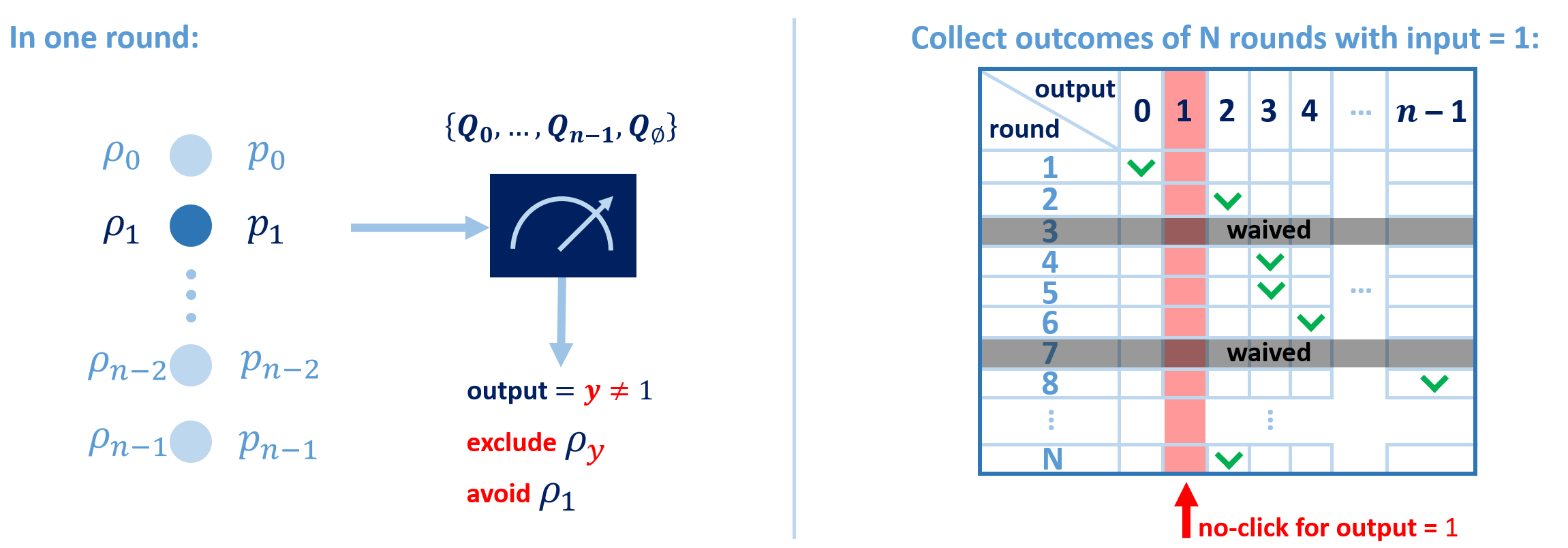}}
	\caption{
{\bf A schematic representation of $\eta$-unambiguous state exclusion.} 
One of $\cardin$ states $\rho_x$ is selected at random, with probability $p_x$.
The goal is to exclude some state $\rho_y$ that is {\em not} the selected state $\rho_x$.
The player applies a \mbox{$(\cardin+1)$-outcome} measurement $\{Q_0, Q_1, \ldots, Q_{\cardin-1}, Q_\emptyset\}$, with the aim of minimising the error probability $\sum_x p_x {\rm tr}(Q_x\rho_x)$, i.e., the probability of \emph{incorrectly excluding} the state $\rho_x$.
More precisely, for each selected $\rho_x$, $\sum_{y\neq x,\emptyset}p_y{\rm tr}(Q_y\rho_y)$ is the probability of having some measurement outcomes guaranteed {\em not} to be $x$ --- the probability of successfully {\em avoiding} $\rho_x$.
As shown in the figure, $\rho_1$ is avoided via some measurement outcome not equal to $1$.
The player is allowed to waive a round (provide no answer), which corresponds to the measurement outcome of $Q_\emptyset$. We require the task to be $\eta$-unambiguous --- meaning that rounds can only be waived with probability at most $\eta$. This translates into the condition $Q_\emptyset\le\eta\id$.
To better visualise the idea of exclusion, let us collect outcomes of $N$ rounds with a {\em fixed} input, say, $x=1$.
In this case, outcome data can be listed with a blank column below \mbox{output $=1$;} namely, $x=1$ can never occur, and it is never incorrectly excluded.
}
	\label{Fig:USE} 
	\end{figure*}

\section*{Quantum complementarity of measurements and state ensembles}
In quantum theory, the most general (destructive) measurement is described by a {\em positive operator-valued measure} (POVM)~\cite{QIC-book}, which is a set $\{E_a\}_a$ of operators satisfying $E_a\ge0\;\forall\;a$ and $\sum_aE_a=\id$.
When we consider a state $\rho$, this POVM describes a measurement scheme whose measurement outcome is $a$ with probability $P(a|\rho) := {\rm tr}(\rho E_a)$. 
In this work, we call a set of POVMs ${\bf E}\coloneqq\{E_{a|x}\}_{a,x}$ a {\em measurement assemblage}, where $\{E_{a|x}\}_a$ is a POVM for each $x$.
Conceptually, a measurement assemblage ${\bf E}$ is said to be {\em complementary} if it is impossible to say anything non-trivial  about every measurement {\em simultaneously}. 
This can be formalised by saying that if we pick even just a single outcome $a_x$ of each measurement, it is impossible to find a non-trivial \textit{state-dependent} lower bound $0 < \ell(\rho) \leq P(a_x|x,\rho) = \tr(\rho E_{a_x|x})$ that would hold for all measurements indexed by $x$. 
By non-trivial, we mean that $\ell(\rho)$ does not vanish identically for all $\rho$, in which case this would just tell us the trivial fact that probabilities are non-negative. When this is not the case, there is at least one state for which it is possible to simultaneously obtain non-trivial information about the entire measurement assemblage. Assuming that $\ell(\rho)$ is linear, it can be written as $\ell(\rho) = \tr(M \rho)$, with $M \geq 0$. A mathematically equivalent formulation of measurement complementarity is that ${\bf E}$ is {\em complementary} if, for every possible set of outcomes $\{a_x\}_x$,
\begin{align}\label{Eq:Def}
E_{a_x|x}\ge M\quad\forall\;x\quad\Rightarrow\quad M=0.
\end{align}

We can also discuss complementarity of state ensembles.
In a bipartite setup, when two parties share a state $\rho_{AB}$, a measurement assemblage ${\bf E}$ implemented by $A$ will induce a set of (sub-normalised) states in $B$:
$
\sigma_{a|x}({\bf E},\rho_{AB})\coloneqq{\rm tr}_A\left[(E_{a|x}\otimes\id_B)\rho_{AB}\right].
$
The set ${\bm\sigma}({\bf E},\rho_{AB})\coloneqq\{\sigma_{a|x}({\bf E},\rho_{AB})\}_{a,x}$ (or simply ${\bm\sigma} = \{\sigma_{a|x}\}_{a,x}$) is called a {\em state assemblage} induced by ${\bf E}$ and $\rho_{AB}$.
It can be viewed as a collection of {\em state ensembles}: for each $x$, we have a state ensemble 
$
{\bm\sigma}_x\coloneqq\{\sigma_{a|x}\}_a,
$
specified by a collection of {\em sub-normalised} states, such that the state $\rho_{a|x}~:=~\sigma_{a|x}/{\rm tr}(\sigma_{a|x})$ occurs with the probability $P(a|x)~=~{\rm tr}(\sigma_{a|x})$ after the local measurement of $A$.
Notably, local measurements cannot change $B$'s local state, due to non-signalling: we have $\sum_a\sigma_{a|x} = {\rm tr}_A(\rho_{AB})$ for every $x$.

Similar to the case of measurements, we say that a state assemblage 
${\bm\sigma}$ is complementary if it is impossible to say anything non-trivial simultaneously about every ensemble it encompasses. 
In line with above, this is formalised by saying that even if we pick a single state from each ensemble, specified by ${a_x}$, then it is impossible to find a non-trivial {\em measurement-dependent} lower bound $0 < L(N_0) \leq P(a_x,0|x,N_0) = \tr(\sigma_{a_x|x}N_0)$, where $N_0$ is a POVM element, and we have (arbitrarily) labelled the corresponding outcome $0$. 
Assuming again that $L(N_0)$ is linear, it can be written as $L(N_0) = \tr(N_0 \omega)$ with $\omega \geq 0$, and a mathematically equivalent formulation of state ensemble complementarity is that ${\bm\sigma}$ is complementary if, for every set
$\{a_x\}_x$, 
\begin{align}\label{Eq:Def-State}
\sigma_{a_x|x}\ge \omega\quad\forall\;x\quad\Rightarrow\quad \omega=0.
\end{align}
In what follows, we will show that we have a unified, operational way of analysing these two notions of complementarity for measurements and state ensembles. 
To do so, we will first introduce a novel task, termed {\em $\eta$-unambiguous exclusion}, before showing how it can be applied to complementarity.

\section*{Exclusive mutual information and its operational meaning}
The formulations in Eqs.~\eqref{Eq:Def} and~\eqref{Eq:Def-State} provide a binary characterisation of when a set of measurements or ensembles is complementary or not. 
They however also provide a natural way to \emph{quantify} or \emph{measure} how close to being complementary a measurement or state assemblage is. 
In particular, we can ask what is the largest trace of $M$ and $\omega$ that is possible in Eqs.~\eqref{Eq:Def} and~\eqref{Eq:Def-State}, respectively. 
This motivates us to introduce the following function, defined for a set of positive semi-definite operators $\{N_x\}_x$:
\begin{align}\label{Eq:C-function}
q_{\rm exc}(\{N_x\}_x)\coloneqq \max_{\substack{N_x\ge P\,\,\forall\,x\\P\ge0}}{\rm tr}(P),
\end{align}
which is a semi-definite program~\cite{Watrous-book,SDP-textbook} (see Appendix for its dual).
Note that 
$
0\le q_{\rm exc}(\{N_x\}_x) \le\min_x{\rm tr}(N_x), 
$
and the upper bound is saturated if $N_x = N_y$ $\forall\;x,y$.
Consequently, when we consider $\{N_x\}_x$ to be a set of quantum states, $I_{\rm exc}:= -\log_2 q_{\rm exc}$ behaves as a type of \emph{mutual information} (see Appendix). 
For this reason, we call $I_{\rm exc}$ the {\em exclusion mutual information}.
Our first new insight is that in this context, $I_{\rm exc}$ has a clear operational meaning in a novel task, aiming to (almost) {\em unambiguously exclude} one of $\rho_x$.
We call this an {\em $\eta$-unambiguous state-exclusion task}, which we sketch here (see Fig.~\ref{Fig:USE} for the detailed setup).

During the task, $\rho_x$ is prepared with probability $p_x>0$, for $x = 0, \ldots, \cardin-1$.
The aim (as in all exclusion tasks, which are also known as {\em antidistinguishability} and {\em state elimination tasks}; see, e.g., Refs~\cite{Mishra2023,Bandyopadhyay2014,PhysRevResearch.2.013326,Russo2023,Pusey2012,Barrett2014,Ducuara2020} and the references therein) is to make a guess of which state was  {\em not} prepared. In line with `unambiguous' tasks, we allow the player to waive rounds, and declare `I don't know' some of the time. 
In \emph{strict} unambiguous tasks, no error is tolerated, i.e., the player must only answer when they are sure to be correct, and the goal then is to minimise the fraction of inconclusive rounds. Here, we introduce a novel non-strict version of unambiguous tasks; in particular, we bound the fraction of inconclusive rounds by $\eta$ (which we refer to as the \emph{inconclusiveness}). 
This in general will mean that there will be errors, and so we minimise the chance of making an error. 
These $\eta$-unambiguous tasks are therefore similar to hypothesis tests~\cite{Wang2012}, in having two types of `error': error in guessing, and `\emph{inconclusion}', which trade-off against each other.\footnote{That is, if no error is permitted (as in strict unambiguous exclusion), then the inconclusion will be largest; on the other hand, bounding the inconclusion means that there will in general be some amount of error in exclusion.}

Putting everything together and optimising over all possible strategies (i.e., over the measurement used in order to carry out the task), with bounded inconclusion, the minimal error probability reads
\begin{align}\label{Eq:P-error}
\Perrorstate \left(\stEns,\prob,\eta\right) \coloneqq \min_{\substack{(1-\eta)\id\le\sum_xQ_x\le\id\\Q_x\ge0\;\;\forall\;x}}\sum_{x=0}^{\cardin-1} p_x{\rm tr}(Q_x\rho_x),
\end{align}
where $\stEns\coloneqq\{\rho_x\}_x$ and $\prob\coloneqq\{p_x\}_x$.
In order to properly quantify how difficult the task is, it is necessary to consider also the minimal error probability in the purely \emph{classical} version of this task, which is \emph{unambiguous exclusion of (classical) information}. 
In this setting, the player knows that the random variable $X$ is distributed according to $p_x$, and they must exclude some value $y\neq x$ to avoid $x$, with inconclusion bounded by $\eta$. The optimal strategy in this task is rather straightforward: the player simply waives each round with probability $\eta$, and with probability $1-\eta$ they always exclude the least likely value, i.e., $x =~{\rm argmin}\{p_y\}$. The minimal (classical) error probability is 
\begin{align}
\Perrorclass\left(\prob,\eta\right)\coloneqq~(1~-~\eta)~\min_xp_x.
\end{align} 

We can now state the operational interpretation of $q_{\rm exc}$ (and hence also $I_{\rm exc}$). 
From now on, $\prob>0$ denotes that $p_x>0$ $\forall\,x$.

\begin{result}\label{Result:ERE-Task-maintext}
For every set of states $\stEns = \{\rho_x\}_{x=0}^{\cardin-1}$, there exists a critical inconclusiveness $\eta_*<~1$ such that 
\begin{align}\label{Eq:Result:ERE-Task}
q_{\rm exc}\left(\stEns\right) = \min_{\prob>0}\frac{\Perrorstate\left(\stEns,\prob,\eta\right)}{\Perrorclass\left(\prob,\eta\right)}\quad\forall\;\eta_*\le\eta<1.
\end{align}
The minimum is achieved by $\left\{p_x = 1/\cardin\right\}_x$.
\end{result}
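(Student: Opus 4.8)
The plan is to establish two matching bounds: a lower bound $\Perrorstate(\stEns,\prob,\eta)\ge q_{\rm exc}(\stEns)\,\Perrorclass(\prob,\eta)$ valid for \emph{every} prior $\prob>0$ and every $\eta\in[0,1)$, and an upper bound at the uniform prior that holds once $\eta$ is large enough; together these pin the ratio to $q_{\rm exc}(\stEns)$ and force the uniform prior to be optimal. Both directions rest on the program $q_{\rm exc}$ of Eq.~\eqref{Eq:C-function} and its dual, which a short Lagrangian computation (using that $\max_{P\ge0}{\rm tr}[P(\id-\sum_xZ_x)]$ equals $0$ if $\sum_xZ_x\ge\id$ and $+\infty$ otherwise) gives as
\begin{align}\label{Eq:dual-plan}
q_{\rm exc}(\{N_x\}_x)=\min_{\substack{Z_x\ge0\;\forall\,x\\\sum_xZ_x\ge\id}}\sum_x{\rm tr}(Z_xN_x),
\end{align}
with strong duality holding because the dual is strictly feasible ($Z_x=\id$ for all $x$ gives $\sum_xZ_x=\cardin\,\id>\id$ when $\cardin\ge2$; the case $\cardin=1$ is trivial). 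I also fix a maximiser $P^*$ of the primal in Eq.~\eqref{Eq:C-function} for $\stEns$ — it exists since $\{P:0\le P\le\rho_x\ \forall x\}$ is compact — so $0\le P^*\le\rho_x$ for all $x$ and ${\rm tr}(P^*)=q_{\rm exc}(\stEns)$.

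For the lower bound I would take an arbitrary feasible $\{Q_x\}$ in Eq.~\eqref{Eq:P-error}. Since $\rho_x\ge P^*\ge0$ and $Q_x\ge0$ we have ${\rm tr}(Q_x\rho_x)\ge{\rm tr}(Q_xP^*)\ge0$, so using $p_x\ge\min_yp_y$ and then $\sum_xQ_x\ge(1-\eta)\id$ together with $P^*\ge0$,
\begin{align}\label{Eq:lb-plan}
\sum_xp_x{\rm tr}(Q_x\rho_x)\ge(\min_yp_y)\,{\rm tr}\big[(\textstyle\sum_xQ_x)P^*\big]\ge(\min_yp_y)(1-\eta)\,q_{\rm exc}(\stEns).
\end{align}
Minimising the left-hand side of Eq.~\eqref{Eq:lb-plan} over feasible $\{Q_x\}$ yields $\Perrorstate(\stEns,\prob,\eta)\ge q_{\rm exc}(\stEns)\,\Perrorclass(\prob,\eta)$ for all $\prob>0$ and all $\eta\in[0,1)$.

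For the upper bound I would take a minimiser $\{Z^*_x\}$ of the dual Eq.~\eqref{Eq:dual-plan} for $\stEns$, set $c:=\|\sum_xZ^*_x\|\ge1$ (the inequality since $\sum_xZ^*_x\ge\id$), and define $\eta_*:=1-1/c<1$. For every $\eta\in[\eta_*,1)$ the operators $Q_x:=(1-\eta)Z^*_x$ satisfy $Q_x\ge0$, $\sum_xQ_x\ge(1-\eta)\id$, and $\sum_xQ_x\le(1-\eta)c\,\id\le\id$, hence are feasible in Eq.~\eqref{Eq:P-error}, and at the uniform prior they achieve $\tfrac{1}{\cardin}\sum_x{\rm tr}(Q_x\rho_x)=\tfrac{1-\eta}{\cardin}\,q_{\rm exc}(\stEns)=q_{\rm exc}(\stEns)\,\Perrorclass(\probunif,\eta)$. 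Thus $\Perrorstate(\stEns,\probunif,\eta)\le q_{\rm exc}(\stEns)\,\Perrorclass(\probunif,\eta)$, which combined with the lower bound forces equality at $\prob=\probunif$. Therefore, for $\eta_*\le\eta<1$, $\min_{\prob>0}\Perrorstate(\stEns,\prob,\eta)/\Perrorclass(\prob,\eta)=q_{\rm exc}(\stEns)$ and is attained at $p_x=1/\cardin$, which is exactly Eq.~\eqref{Eq:Result:ERE-Task}.

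The step I expect to need the most care is the appeal, for the upper bound, to a genuine minimiser $\{Z^*_x\}$ of Eq.~\eqref{Eq:dual-plan}: one must know the dual is \emph{attained}, so that $c<\infty$ and hence $\eta_*<1$, rather than merely approached in a limit. When every $\rho_x$ has full rank this is immediate because the dual objective is then coercive on its feasible set; in general one first restricts all operators to $\mathrm{supp}(\sum_x\rho_x)$ — which alters neither $q_{\rm exc}$ nor the programs in Eq.~\eqref{Eq:P-error}, since $0\le P\le\rho_x\ \forall x$ forces $\mathrm{supp}(P)\subseteq\bigcap_x\mathrm{supp}(\rho_x)$ — and then rules out the remaining non-attaining recession directions of the dual feasible set. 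All other steps are routine bookkeeping with the two semidefinite programs.
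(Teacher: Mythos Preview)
Your argument is correct and follows essentially the paper's route: both derive the dual form of $q_{\rm exc}$ (the paper's Eq.~\eqref{Eq:C-function-dual}), set $\eta_*=1-1/\|\sum_x Z^*_x\|_\infty$ from a dual minimiser, rescale $Q_x=(1-\eta)Z^*_x$ to achieve the ratio at the uniform prior, and bound arbitrary priors via $p_x\ge\min_yp_y$ --- the paper packages these last steps into Lemma~\ref{lemma:characterisation}. Your primal-based lower bound (using $P^*$ directly, which even works for all $\eta\in[0,1)$) and your explicit worry about dual attainment are slightly more careful than the paper, which simply asserts the existence of $Q_x^{\rm opt}$, but the overall strategy is the same.
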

We detail the proof in the Appendix.
This shows that $I_{\rm exc} = -\log_2 q_{\rm exc}$ operationally characterises the quantum advantage (i.e., multiplicative decrease compared to the classical error) that is provided by the set of states $\{\rho_x\}_x$ (as a quantum encoding of the classical information $x$) in the task of $\eta$-unambiguous exclusion, in the regime where the inconclusiveness is sufficiently small.  

A natural question is whether the class of strategies considered here for the quantum player is general enough, or whether one can improve the advantage by considering a more general strategy (i.e., one involving the use of additional randomness, and more general guessing strategies after the measurement result). We prove in the Appendix, rather surprisingly, pre-measurement quantum channels and post-processing of the output classical data {\em cannot} improve the advantage.

\section*{Characterising quantum complementarity}

\begin{figure*}
	\includegraphics[width=\textwidth]{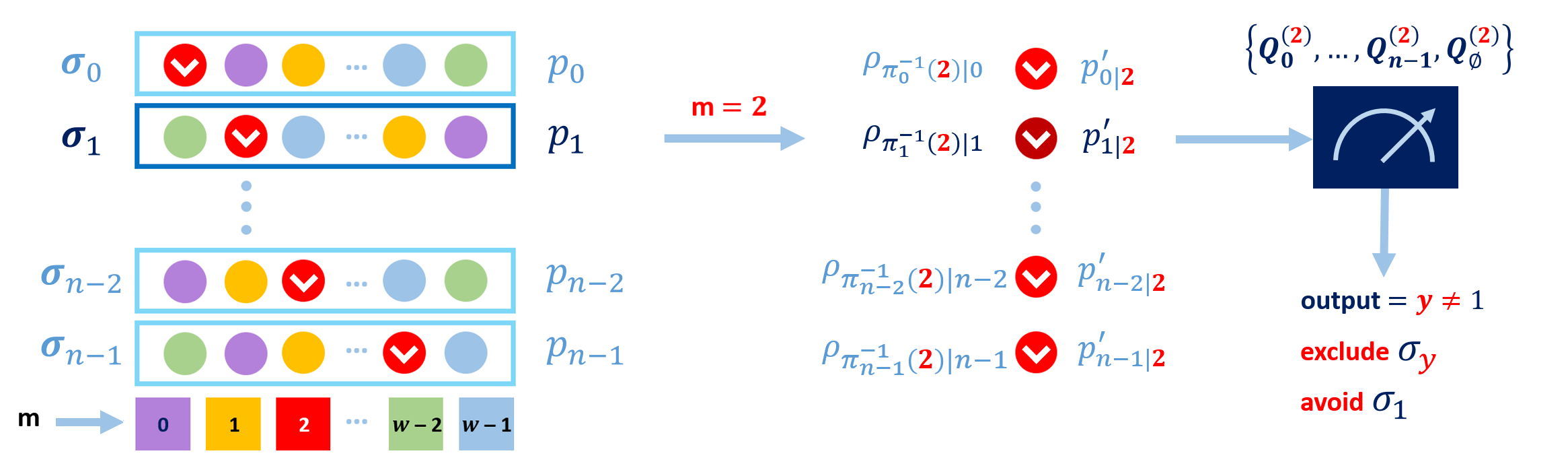}
	\caption{
	{\bf A schematic representation of $\eta$-unambiguous ensemble exclusion.} 
 The classical information $x$ is now encoded in an ensemble ${\bm \sigma_x}$. After the ensemble $\bm\sigma_x$ is chosen, a state $\rho_{a|x}$ from that ensemble is selected, with probability $p(a|x)$. The player is sent this state, along with the additional side information $m$ (depicted as the colour of the state), which encodes conditional information of the form `if $X = x$ then $a = a_x$', via $m = \pi_x(a)$, where $\permu = \{\pi_x\}_x$ is a set of permutations specifying the particular colouring.
 For instance, in the figure, $m=2$ is specified by red with a white click. 
 The player thus effectively plays a state exclusion task, for the ensemble of states $\{\rho_{\pi_x^{-1}(m)|x}\}_x$ with probabilities $p'_{x|m}$ as defined below Eq.~\eqref{Eq:p-ensemble} (namely, it is a state exclusion for states marked in red with white clicks).
 The player can now choose their measurement based upon the message $m$, which is given by $\{Q^{(m)}_0,...,Q^{(m)}_{\cardin-1},Q^{(m)}_\emptyset\}$.
 The goal is to avoid the sent ensemble (i.e., exclude an ensemble different from the sent one), and as before, the player is allowed to waive a given round with probability at most $\eta$. 
	\label{Fig:UensembleSE}}
	\end{figure*}

We now return our attention to complementarity, and show that by considering a different quantum encoding of classical information --- this time into \emph{ensembles} rather than states, we can obtain a result similar to Theorem~\ref{Result:ERE-Task-maintext}. 

Recall from Eq.~\eqref{Eq:C-function} that $q_{\rm exc}$ is designed to quantify the `size' of the operator $P$ in Eq.~\eqref{Eq:Def}.
We can return to this motivation and use it to introduce the following quantifier of measurement complementarity:
\begin{align}
C_{\rm POVM}({\bf E})\coloneqq\max_{\permu}\sum_{a=0}^{\cardout-1}q_{\rm exc}(\{E_{\pi_x(a)|x}\}_x),
\end{align}
where, throughout this paper, $\permu\coloneqq\{\pi_x\}_x$ denotes all possible sets of permutations; i.e., each $\pi_x$ permutes the label $a=0,...,\cardout-1$, 
where $\cardout$ is the cardinality of the set of outcomes $a$.
This definition can be seen as a quantification of Eq.~\eqref{Eq:Def}. 
Indeed, we see immediately that  ${\bf E}$ is complementary if and only if
$
C_{\rm POVM}({\bf E})=0.
$
Moreover, if ${\bf E}$ is close to being complementary, then $C_{\rm POVM}({\bf E})$ will be small.

Exactly the same mathematics also applies for state ensemble complementarity --- for any state assemblage ${\bm\sigma}$, we define

\begin{align}
C_{\rm SE}({\bm\sigma})\coloneqq \max_{\permu}\sum_{a=0}^{\cardout - 1}q_{\rm exc}(\{\sigma_{\pi_x(a)|x}\}_x).
\end{align}
Again, ${\bm\sigma}$ is complementary if and only if $C_{\rm SE}({\bm\sigma})=0$, and if the state assemblage is close to being complementary, then $C_{\rm SE}$ will be small. 

We can now directly build upon the previous section, to show that $C_{\rm POVM}$ and $C_{\rm SE}$ have clear operational meanings, by generalising $\eta$-unambiguous exclusion from state exclusion to \emph{state-ensemble} exclusion. 
We sketch the task here, and a detailed summary can be found in  Fig.~\ref{Fig:UensembleSE}. The basic idea is to encode the information $x$ into an ensemble, rather than into an individual state. 

Recall that a state assemblage ${\bm\sigma} = \{{\bm\sigma_x}\}_x$ is a collection of state ensembles, labelled by $x$. 
We can therefore naturally view this as an \emph{ensemble-encoding} of the classical information $x$. Consider therefore a task where after encoding $x$ into an ensemble, a player is given a state drawn at random from that ensemble; i.e., they are given the state $\rho_{a|x}=\sigma_{a|x}/{\rm tr}(\sigma_{a|x})$ with probability $P(a|x)={\rm tr}(\sigma_{a|x})$.
In the absence of any other information, this task reduces to state-exclusion of the \emph{average} states $\averho \coloneqq \sum_a P(a|x) \rho_{a|x}$ of the ensembles, and therefore does not add anything new.

In order to arrive at a novel task, we provide the player with additional \emph{conditional} information of the form: `if $X = x$, then $a = a_x$'.
The ensemble {\em seen by the player} is thus $\{\rho_{a_x|x}\}_x$.
The additional information received by the player can be considered as a type of classical side-information, along with the quantum side-information $\rho_{a_x|x}$, that the player can use to potentially do better in the task of $\eta$-unambiguous exclusion, compared to state exclusion. 

In order to analyse the error probability, it is useful to express the information received in a mathematically-equivalent formulation, which is easier to analyse. 
We can instead model the situation as there being an agreement before the task starts of a set of permutations $\permu=\{\pi_x\}_x$, with the player receiving the classical information $m = \pi_x(a)$. 
The player then knows that, if $X = x$, then the value of $a$ was $a_x = \pi_x^{-1}(m)$, i.e., they can recover the conditional information from above, by applying the inverse permutation to the message they received. 

To play the game, the player now proceeds as previously, by measuring the state received, using a measurement that will announce an inconclusive result with probability at most $\eta$. 
The main difference now is that the player's measurement can depend upon the message $m$, and the final error probability will be averaged over the different possible messages. 
Altogether, we see that
\begin{multline}\label{Eq:p-ensemble}
\perrorens({\bm\sigma},\prob,\eta, \permu)\coloneqq\\
\sum_{m=0}^{\cardout-1} p'_m \Perrorstate\left(\{\rho_{\pi_x^{-1}(m)|x}\}_x,\{p'_{x|m}\}_x,\eta\right),
\end{multline}
where $p'_m = \sum_x p_x P\left(\pi_x^{-1}(m)\middle|x\right)$ is the probability that the player receives the message $m$, and
\mbox{$
p'_{x|m} = p_x P\left(\pi_x^{-1}(m)\middle|x\right)/p'_m
$}
is the (conditional) probability of the states in the ensemble $\{\rho_{\pi_x^{-1}(m)|x}\}_x$.

Finally, we will be interested in the \emph{least-informative} (or \emph{worst-case}) classical side information, that is, the set of permutations $\permu$ such that the error probability is largest:
\begin{align}\label{Eq:Perrorens definition}
	\Perrorens({\bm\sigma},\prob,\eta) \coloneqq \max_{\permu} \perrorens({\bm\sigma},\prob,\eta,\permu).
\end{align}
With the above in place, we can now generalise Theorem~\ref{Result:ERE-Task-maintext} to the case of $\eta$-unambiguous ensemble exclusion with \emph{non-signalling} assemblages:
\begin{result}\label{Result:Exclusion-SE-maintext}
For every non-signalling state assemblage ${\bm\sigma}$ containing $\cardin$ ensembles, there exists $\eta_*<1$ such that 
\begin{align}\label{Eq:Result:Exclusion-SE}
C_{\rm SE}({\bm\sigma}) = \min_{\prob>0}\frac{\Perrorens\left({\bm\sigma},\prob,\eta\right)}{\Perrorstate \left(\{\averho\}_x,\prob,\eta\right)}\quad\forall\,\eta_*\le\eta<1.
\end{align}
The minimum is achieved by  $\left\{p_x = 1/\cardin\right\}_x$.
\end{result}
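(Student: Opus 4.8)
The plan is to collapse the ensemble-exclusion error into a sum of ordinary state-exclusion errors --- one for each ``column'' of the assemblage --- and then apply Theorem~\ref{Result:ERE-Task-maintext} column by column. First I would dispose of the denominator in Eq.~\eqref{Eq:Result:Exclusion-SE}: since ${\bm\sigma}$ is non-signalling, $\sum_a\sigma_{a|x}$ is one and the same operator for every $x$, so $\{\averho\}_x$ is a list of $\cardin$ copies of a single state; for $\cardin$ identical states the optimum in Eq.~\eqref{Eq:P-error} is attained by assigning the whole budget $(1-\eta)\id$ to the least-likely index and $0$ to the rest, whence $\Perrorstate(\{\averho\}_x,\prob,\eta)=(1-\eta)\min_x p_x=\Perrorclass(\prob,\eta)$. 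Thus the right-hand side of Eq.~\eqref{Eq:Result:Exclusion-SE} is $\min_{\prob>0}\Perrorens({\bm\sigma},\prob,\eta)/[(1-\eta)\min_x p_x]$, and non-signalling plays no further role.

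Second, I would unfold $\Perrorens$. Substituting Eq.~\eqref{Eq:P-error} into Eq.~\eqref{Eq:p-ensemble} and using $p'_m\,p'_{x|m}=p_x\,P(\pi_x^{-1}(m)|x)$ together with $\sigma_{a|x}=P(a|x)\,\rho_{a|x}$ (terms with $P(a|x)=0$ simply drop out), each $m$-summand $p'_m\Perrorstate(\{\rho_{\pi_x^{-1}(m)|x}\}_x,\{p'_{x|m}\}_x,\eta)$ becomes $g(\{\sigma_{\pi_x^{-1}(m)|x}\}_x,\prob,\eta)$, where for any family of positive operators $\{N_x\}_x$ we set $g(\{N_x\}_x,\prob,\eta):=\min\{\sum_x p_x\tr(Q_x N_x):Q_x\ge0,\ (1-\eta)\id\le\sum_x Q_x\le\id\}$. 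Since $\permu\mapsto\{\pi_x^{-1}\}_x$ is a bijection of permutation-tuples, maximising over $\permu$ and relabelling $m\to a$ yields $\Perrorens({\bm\sigma},\prob,\eta)=\max_{\permu}\sum_{a=0}^{\cardout-1}g(\{\sigma_{\pi_x(a)|x}\}_x,\prob,\eta)$, structurally parallel to $C_{\rm SE}({\bm\sigma})=\max_{\permu}\sum_a q_{\rm exc}(\{\sigma_{\pi_x(a)|x}\}_x)$.

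Third, I would invoke Theorem~\ref{Result:ERE-Task-maintext}, whose proof in the Appendix applies verbatim when $\{N_x\}_x$ is an arbitrary family of positive semi-definite operators rather than normalised states (indeed $\Perrorstate$ in Eq.~\eqref{Eq:P-error} is already defined for such families, with $\Perrorclass=(1-\eta)\min_x p_x$, so $\Perrorstate(\{N_x\}_x,\prob,\eta)=g(\{N_x\}_x,\prob,\eta)$). Applied to each column-family it gives a threshold $\eta_*(\{\sigma_{\pi_x(a)|x}\}_x)<1$ such that, for $\eta$ above it, $g(\{\sigma_{\pi_x(a)|x}\}_x,\prob,\eta)\ge(1-\eta)(\min_x p_x)\,q_{\rm exc}(\{\sigma_{\pi_x(a)|x}\}_x)$ for every $\prob>0$, with equality at $\prob=\probunif$. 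I would then take $\eta_*$ to be the maximum of these thresholds over the finitely many permutation-tuples $\permu$ and outcomes $a$; this is $<1$.

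Finally I would assemble the two bounds, for $\eta_*\le\eta<1$. For ``$\ge$'' in Eq.~\eqref{Eq:Result:Exclusion-SE}: choosing $\permu$ to attain $C_{\rm SE}({\bm\sigma})$ and summing the above inequality over $a$ gives $\Perrorens({\bm\sigma},\prob,\eta)\ge(1-\eta)(\min_x p_x)\,C_{\rm SE}({\bm\sigma})$ for every $\prob>0$, so the ratio is at least $C_{\rm SE}({\bm\sigma})$ throughout. For ``$\le$'': at $\prob=\probunif$, summing the equality over $a$ for each $\permu$ and then maximising over $\permu$ gives $\Perrorens({\bm\sigma},\probunif,\eta)=(1-\eta)C_{\rm SE}({\bm\sigma})/\cardin$, so the ratio equals $C_{\rm SE}({\bm\sigma})$ at the uniform distribution. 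Hence $\min_{\prob>0}$ of the ratio is $C_{\rm SE}({\bm\sigma})$, attained at $\{p_x=1/\cardin\}_x$. I expect the main obstacle to be the bookkeeping in the second step --- justifying the algebraic collapse of $\perrorens$ and, in particular, that the degenerate terms with $P(a|x)=0$, where a ``state'' $\rho_{a|x}$ in Eq.~\eqref{Eq:p-ensemble} is ill-defined, cause no trouble (there $\sigma_{a|x}=0$, and both sides of every inequality involved vanish) --- together with confirming that the only place non-signalling is used is the identification of the denominator in the first step.
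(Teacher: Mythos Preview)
Your proposal is correct and follows essentially the same route as the paper: you reduce the denominator to $\Perrorclass(\prob,\eta)$ via non-signalling, rewrite $\Perrorens$ as $\max_{\permu}\sum_a D_\eta(\{p_x\sigma_{\pi_x(a)|x}\}_x)$ (your $g$ is exactly the paper's $D_\eta$ with the $p_x$ absorbed), take $\eta_*$ as the maximum of the finitely many column thresholds, and then combine the per-column bounds --- the only cosmetic difference is that you cite Theorem~\ref{Result:ERE-Task-maintext} (in its positive-semidefinite extension, which the Appendix explicitly notes) whereas the paper cites the underlying Lemma~\ref{lemma:characterisation} directly. Your explicit handling of the degenerate $P(a|x)=0$ case is a nice piece of bookkeeping that the paper leaves implicit.
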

See Appendix for the proof.
Theorem~\ref{Result:Exclusion-SE-maintext} says that if we consider the worst-case advantage offered by a non-signalling assemblage $\bm\sigma$ in the setting of ensemble-exclusion over induced state-exclusion task (based upon the average state of each ensemble) is precisely given by the \emph{complementarity of the state assemblage} $\bm\sigma$. 
Said the other way, this shows that this quantification of complementarity has an operational interpretation, as characterising the advantage a state assemblage provides in the newly introduced task of $\eta$-unambiguous exclusion. 

We note first that, since we focus on non-signalling assemblages, $\averho = \rho_B$ for every $x$, and contains \emph{no information} about $x$. We therefore see that for such state assemblages $\Perrorstate \left(\{\averho\}_x,\prob,\eta\right) = \Perrorclass(\prob,\eta)$. 
The advantage is also thus the advantage over classical $\eta$-unambiguous exclusion. 
Nevertheless, the form presented in Eq.~\eqref{Eq:Result:Exclusion-SE} appears to be the most physically relevant formulation of the result. 

Second, we recall that a state assemblage $\bm\sigma$ is complementary when $C_{\rm SE}({\bm\sigma}) = 0$, corresponding to the impossibility of saying anything non-trivial simultaneously about every ensemble it contains. Theorem \ref{Result:Exclusion-SE-maintext} shows that such state ensembles have \emph{zero error} in $\eta$-unambiguous ensemble exclusion; namely, they are the strongest resources. 

Finally, in Appendix we show a no-go result: adding pre-measurement quantum channels and post-processing of the output classical data {\em cannot} improve the performance in ensemble exclusion.

It turns out that we can also relate $C_{\rm POVM}$ and $C_{\rm SE}$, consequently linking measurement complementarity and $\eta$-unambiguous exclusion tasks.
In what follows, \mbox{$\rho_A\coloneqq{\rm tr}_B(\rho_{AB})$,} and $\mu_{\rm min}(\rho_A)$ is the smallest eigenvalue of the state $\rho_A$.
\begin{result}\label{Result:ExclusionTask}
Let ${\bf E}$ be a measurement assemblage.
Then for every $\rho_{AB}$ with full-rank marginal states, we have
\begin{align}
\mu_{\rm min}(\rho_A)C_{\rm POVM}({\bf E})\le C_{\rm SE}[{\bm\sigma}({\bf E},\rho_{AB})].
\end{align}
Moreover, there always exist a state $\rho_{AB}$ with full-rank marginals achieving the upper bound.
\end{result}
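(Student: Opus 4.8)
The plan is to prove the two directions separately: first the inequality $\mu_{\rm min}(\rho_A)\,C_{\rm POVM}({\bf E}) \le C_{\rm SE}[{\bm\sigma}({\bf E},\rho_{AB})]$ for every $\rho_{AB}$ with full-rank marginals, and then the existence of a state saturating it. For the inequality, the natural starting point is to unwind the definitions: $C_{\rm SE}[{\bm\sigma}]$ maximises $\sum_a q_{\rm exc}(\{\sigma_{\pi_x(a)|x}\}_x)$ over permutation sets $\permu$, and $\sigma_{a|x} = {\rm tr}_A[(E_{a|x}\otimes\id_B)\rho_{AB}]$. So I would fix the optimal permutation set $\permu^*$ for $C_{\rm POVM}({\bf E})$ and show that, with the \emph{same} $\permu^*$, each summand $q_{\rm exc}(\{\sigma_{\pi^*_x(a)|x}\}_x)$ is at least $\mu_{\rm min}(\rho_A)$ times the corresponding $q_{\rm exc}(\{E_{\pi^*_x(a)|x}\}_x)$. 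The key step is therefore a single-term statement: for positive operators $\{N_x\}_x$ on system $A$ and any bipartite $\rho_{AB}$ with full-rank $\rho_A$,
\begin{align}\label{Eq:keylemma}
q_{\rm exc}\big(\{{\rm tr}_A[(N_x\otimes\id_B)\rho_{AB}]\}_x\big) \ge \mu_{\rm min}(\rho_A)\, q_{\rm exc}(\{N_x\}_x).
\end{align}

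To establish \eqref{Eq:keylemma}, I would take a feasible point $P\ge 0$ for $q_{\rm exc}(\{N_x\}_x)$ with $N_x \ge P$ for all $x$, and build a feasible point for the left-hand SDP. The idea is that the channel-like map $\Phi(\cdot) = {\rm tr}_A[(\,\cdot\,\otimes\id_B)\rho_{AB}]$ is positive, so $N_x \ge P$ implies $\Phi(N_x) = \sigma_{a|x}$-type operator $\ge \Phi(P)$; hence $\Phi(P)$ is feasible, giving $q_{\rm exc}(\{\sigma_x\}_x) \ge {\rm tr}[\Phi(P)]$. The point is then to lower-bound ${\rm tr}[\Phi(P)]$ by $\mu_{\rm min}(\rho_A)\,{\rm tr}(P)$. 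Writing $\rho_{AB}$ in terms of a purification or using $\rho_A \ge \mu_{\rm min}(\rho_A)\,\Pi_A$ (projector onto the support, which is all of $A$ by full rank, so $\rho_A \ge \mu_{\rm min}(\rho_A)\id_A$), one has ${\rm tr}[\Phi(P)] = {\rm tr}[(P\otimes\id_B)\rho_{AB}] = {\rm tr}[P\,\rho_A] \ge \mu_{\rm min}(\rho_A)\,{\rm tr}(P)$. Taking the supremum over feasible $P$ yields \eqref{Eq:keylemma}; summing over $a$ with the optimal permutation for the POVM side and then maximising the left side over all permutations gives the theorem's inequality.

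For the saturation claim I would exhibit an explicit $\rho_{AB}$. The natural candidate is an isotropic-type or maximally-mixed-marginal state that makes the chain of inequalities tight: one needs $\rho_A = \mu_{\rm min}(\rho_A)\id_A$ (so $\rho_A = \id_A/d_A$, the maximally mixed state, forcing $\mu_{\rm min} = 1/d_A$), and one needs the positive map $\Phi$ to act on the relevant operators in a way that preserves the exclusion structure with no loss — e.g. taking $\rho_{AB} = \proj{\Phi^+}$ the maximally entangled state (suitably normalised), for which ${\rm tr}_A[(N\otimes\id)\rho_{AB}] = N^T/d_A$, so that $q_{\rm exc}(\{N_x^T/d_A\}_x) = q_{\rm exc}(\{N_x\}_x)/d_A$ exactly, since transposition and uniform scaling both act transparently on the SDP \eqref{Eq:C-function}. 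The main obstacle I anticipate is the saturation direction: one must check that the chosen $\rho_{AB}$ really is optimal over \emph{all} full-rank states (not merely that it attains equality in one inequality), which requires arguing that no other full-rank $\rho_{AB}$ can beat the bound — but that is exactly the inequality just proved, so the two halves fit together, and the only real work is verifying the transpose/scaling invariance of $q_{\rm exc}$ and that the maximally entangled state (regularised to full rank, or handled as a limit) does the job. Handling the full-rank requirement for the maximally entangled state — which has rank-one, hence non-full-rank, marginals in the unregularised form — by an $\epsilon$-perturbation and continuity of $q_{\rm exc}$ (an SDP value, hence continuous) is the one technical point that needs care.
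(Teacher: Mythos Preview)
Your proof of the inequality is correct and takes a genuinely different route from the paper. The paper works through its $\eta$-unambiguous machinery: it represents both $C_{\rm SE}$ and $C_{\rm POVM}$ via the function $D_\eta$ (Lemma~\ref{lemma:characterisation} and Corollary~\ref{CDLemma}), then compares the two by the dual change of variables $\tau_x = {\rm tr}_B[(\id_A\otimes Q_x)\rho_{AB}]$, observing that $(1-\eta)\mu_{\rm min}(\rho_A)\id_A \le \sum_x \tau_x \le \id_A$ so that the $\tau_x$'s are feasible for $D_{\delta_\eta}$ with $\delta_\eta = 1-(1-\eta)\mu_{\rm min}(\rho_A)$. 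Your argument is more elementary: you stay at the level of the primal SDP for $q_{\rm exc}$, push a feasible $P$ through the completely positive map $\Phi(\cdot)={\rm tr}_A[(\cdot\otimes\id_B)\rho_{AB}]$, and use ${\rm tr}[\Phi(P)]={\rm tr}(P\rho_A)\ge\mu_{\rm min}(\rho_A){\rm tr}(P)$. This avoids the $\eta$-parametrisation entirely and is self-contained; the paper's route has the advantage of tying the bound directly back into its operational narrative, but yours is the cleaner proof of the bare inequality.

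For saturation you and the paper agree on the maximally entangled state, and your transpose/scaling observation $q_{\rm exc}(\{N_x^\intercal/d\}_x)=q_{\rm exc}(\{N_x\}_x)/d$ is exactly what makes it work. One correction, though: the maximally entangled state $\ket{\phi^+_{AB}}=\sum_n\ket{nn}/\sqrt{d}$ has \emph{maximally mixed} marginals $\rho_A=\rho_B=\id/d$, which are full rank. The state itself is rank one, but the theorem's hypothesis is on the marginals, not on $\rho_{AB}$. So no $\epsilon$-perturbation or continuity argument is needed; $\phi^+_{AB}$ is already admissible, and your anticipated technical obstacle disappears.
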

See Appendix for the proof.
Note that we consider $\rho_{AB}$ with full-rank marginals without loss of generality.
Physically, if $\rho_{AB}$ has non-full-rank marginals, one can effectively treat it as a state in a smaller Hilbert space.
Theorem~\ref{Result:ExclusionTask} implies the following observation:
\begin{maincoro}\label{coro:Complementary1-1mapping-1}
When ${\bf E}$ is not complementary, then ${\bm\sigma}({\bf E},\rho_{AB})$ is not complementary for every $\rho_{AB}$ with full-rank marginals.
Conversely, when ${\bf E}$ is complementary, then there always exists at least one $\rho_{AB}$ such that ${\bm\sigma}({\bf E},\rho_{AB})$ is complementary.
\end{maincoro}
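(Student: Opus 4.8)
The plan is to obtain both implications directly from Theorem~\ref{Result:ExclusionTask}, using only the binary characterisations already recorded: ${\bf E}$ is complementary iff $C_{\rm POVM}({\bf E})=0$, and ${\bm\sigma}$ is complementary iff $C_{\rm SE}({\bm\sigma})=0$. The first implication will come from the inequality in Theorem~\ref{Result:ExclusionTask}, the second from its saturation case. So the corollary is essentially a reading-off of that theorem.

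For the first (contrapositive) implication, suppose ${\bf E}$ is not complementary, so that $C_{\rm POVM}({\bf E})>0$. Fix any $\rho_{AB}$ with full-rank marginals; then $\mu_{\rm min}(\rho_A)>0$, because the least eigenvalue of a full-rank state is strictly positive. Feeding this into Theorem~\ref{Result:ExclusionTask} gives $C_{\rm SE}[{\bm\sigma}({\bf E},\rho_{AB})]\ge\mu_{\rm min}(\rho_A)C_{\rm POVM}({\bf E})>0$, so ${\bm\sigma}({\bf E},\rho_{AB})$ is not complementary; since $\rho_{AB}$ was an arbitrary state with full-rank marginals, this is exactly the claim.

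For the converse, suppose ${\bf E}$ is complementary, i.e.\ $C_{\rm POVM}({\bf E})=0$. By the ``moreover'' part of Theorem~\ref{Result:ExclusionTask} there is a $\rho_{AB}$ with full-rank marginals for which the inequality is saturated, $C_{\rm SE}[{\bm\sigma}({\bf E},\rho_{AB})]=\mu_{\rm min}(\rho_A)C_{\rm POVM}({\bf E})=0$, hence ${\bm\sigma}({\bf E},\rho_{AB})$ is complementary. Concretely, one may take $\rho_{AB}=\proj{\Phi}$ with $\ket{\Phi}=\tfrac{1}{\sqrt d}\sum_i\ket{i}\ket{i}$ on $\mathcal{H}_A\otimes\mathcal{H}_A$ (here $d=\dim\mathcal{H}_A$): then $\rho_A=\id/d$, so $\mu_{\rm min}(\rho_A)=1/d$, while the induced assemblage is $\sigma_{a|x}=\tfrac1d E_{a|x}^{T}$. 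From the definition~\eqref{Eq:C-function}, $q_{\rm exc}$ is positively homogeneous, $q_{\rm exc}(\{\lambda N_x\}_x)=\lambda\,q_{\rm exc}(\{N_x\}_x)$ for $\lambda>0$, and invariant under transposition, $q_{\rm exc}(\{N_x^{T}\}_x)=q_{\rm exc}(\{N_x\}_x)$, since $A\ge B\Leftrightarrow A^{T}\ge B^{T}$ and $\tr(P)=\tr(P^{T})$; hence $q_{\rm exc}(\{\sigma_{\pi_x(a)|x}\}_x)=\tfrac1d q_{\rm exc}(\{E_{\pi_x(a)|x}\}_x)$ for every $\permu$, so $C_{\rm SE}[{\bm\sigma}({\bf E},\proj{\Phi})]=\tfrac1d C_{\rm POVM}({\bf E})$. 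This simultaneously exhibits the saturating state used above and makes the conclusion transparent: when ${\bf E}$ is complementary, this particular induced assemblage has $C_{\rm SE}=0$.

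I do not expect a genuine obstacle: all the work sits in Theorem~\ref{Result:ExclusionTask}, and the corollary is bookkeeping on top of it. The two points that need a little care are: (i) in the first implication full-rankness of the marginal is indispensable — it is precisely what keeps $\mu_{\rm min}(\rho_A)$ bounded away from $0$, without which the bound is vacuous, and this is why the statement is restricted to such $\rho_{AB}$; and (ii) in the converse it matters that the saturating state can be chosen independently of ${\bf E}$ (the maximally entangled state works uniformly), so the argument does not degenerate in precisely the regime $C_{\rm POVM}({\bf E})=0$ that is of interest.
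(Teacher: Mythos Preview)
Your proposal is correct and matches the paper's intended argument: the paper presents the corollary as an immediate consequence of Theorem~\ref{Result:ExclusionTask} without a separate proof, and your reading-off (using $\mu_{\rm min}(\rho_A)>0$ for the first implication and the saturating maximally entangled state for the second) is exactly what the paper has in mind. Your explicit verification that $\sigma_{a|x}=\tfrac1d E_{a|x}^{\intercal}$ for $\rho_{AB}=\proj{\phi^+_{AB}}$ and the resulting identity $C_{\rm SE}=\tfrac1d C_{\rm POVM}$ coincide with the computation in the paper's proof of Theorem~\ref{Result:ExclusionTask}.
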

Our findings thus provides a {\em one-to-one correspondence} between measurement complementarity and state ensemble complementarity.
Moreover, this link has a clear operational meaning in unambiguous exclusion tasks of state ensembles.

\section*{Application to encryption tasks}
As an application of the above results, measurement complementarity of {\em a pair of} measurements is in fact equivalent to advantages in an {\em encryption task}, which we detail now.
Consider two agents $A$ (sender) and $B$ (receiver) sharing a maximally entangled state $\ket{\phi^+_{AB}}\coloneqq\sum_{n=0}^{d-1}\ket{n}_{A}\otimes\ket{n}_B/\sqrt{d}$.
Their goal is to transmit one bit of classical information (described by the index set $\{0,1\}$)  from $A$ to $B$ 
through a {\em public} communication channel such that (i) ({\em encryption}) the transmitted information cannot be known by the public, and (ii) ({\em unambiguous}) $B$ wants to always decode correctly.
$B$ will be allowed to waive the round, but when $B$ decodes, it must be correct.
To encode and hide the classical information, $A$ chooses a pair of measurements ${\bf E} = \{E_{a|x}\}_{a,x}$ with $x=0,1$ ({\em encoding strategy}) and a pair of permutation $\permu=\{\pi_0,\pi_1\}$ ({\em encryption strategy}).
To extract the information, $B$ chooses between $\cardout$ three-outcome measurements ${\bf Q}_{m}^\eta = \{Q_{0|m},Q_{1|m},Q_{\emptyset|m}\}$ with $m=0,...,\cardout-1$ ({\em decoding strategy}) based on the knowledge of ${\bf E}$ and $\permu$. 
We demand that $Q_{\emptyset|m}\le \eta\id$ $\forall\;m=0,...,\cardout-1$ to make sure the probability of being inconclusive is bounded.
In each round, $x=0,1$ is announced uniformly ($p_x = 1/2$ for $x=0,1$).
$A$ measures their half of $\ket{\phi^+_{AB}}$, using the measurements ${\bf E}$, and post-processes the measurement outcome $a$ by $\pi_x$, obtaining $m\coloneqq\pi_x(a)$.
$A$ then sends the {\em message} $m$ to $B$ through a public classical communication.
$B$ then measures their half of $\ket{\phi^+_{AB}}$ using ${\bf Q}_{m}^\eta$ and decodes $x$ as $0$ or $1$, if they measure $Q_{0|m}$ or $Q_{1|m}$, respectively.
$B$ waives the round if they obtain the outcome corresponding to $Q_{\emptyset|m}$.
We call this an {\em entanglement-assisted $\eta$-encryption task} of ${\bf E}$.
See Fig.~\ref{Fig} for a schematic understanding.

\begin{figure}
\scalebox{1.0}{\includegraphics{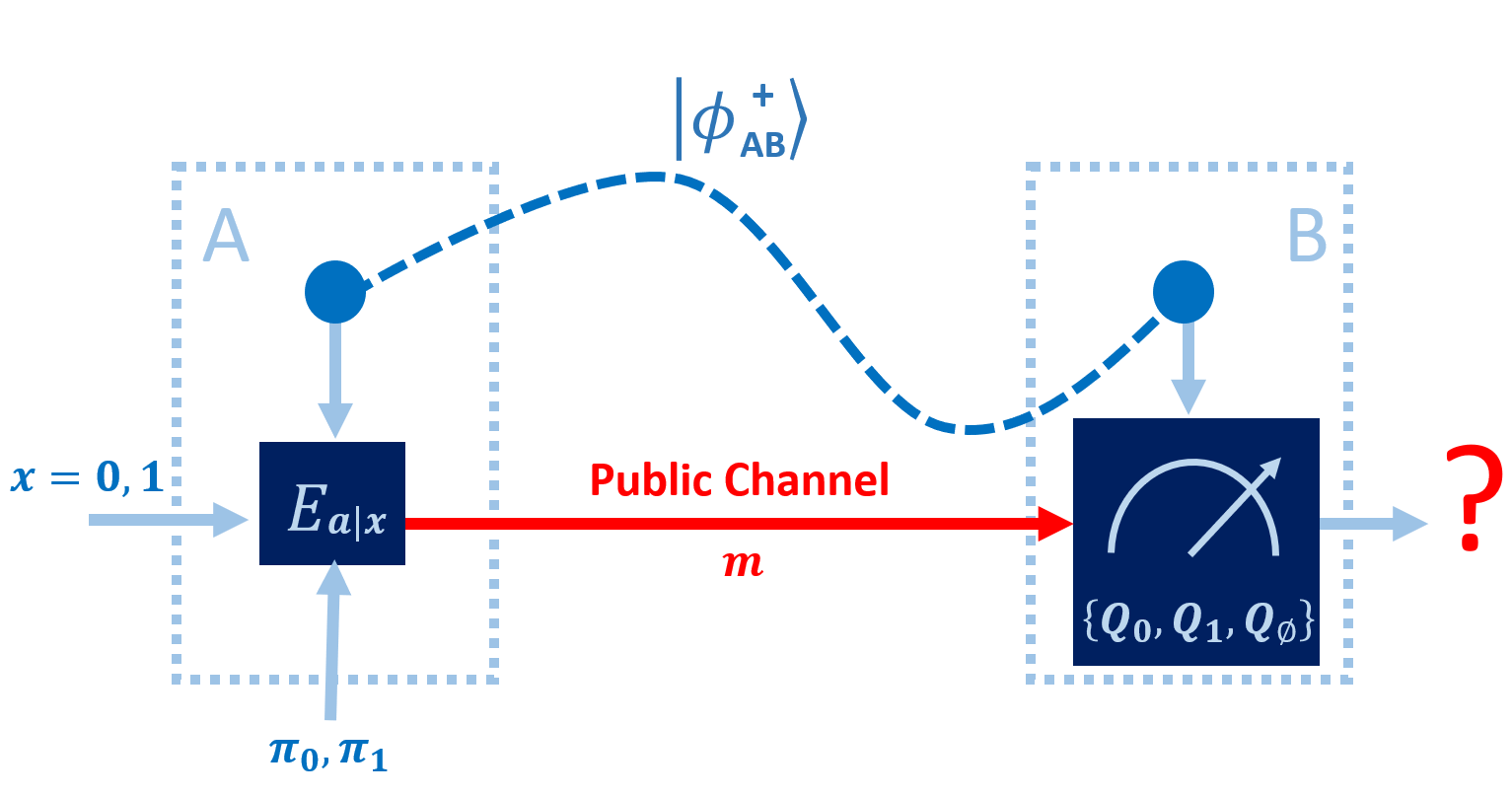}}
\caption{
{\bf A schematic representation of the  entanglement-assisted $\eta$-encryption task.}
At the beginning of each round, $x=0,1$ is announced uniformly.
When $x$ is the input, $A$ measures half of $\ket{\phi^+_{AB}}$ through with $\{E_{a|x}\}_a$, and post-processes the measurement outcome $a$ by the encryption strategy $\pi_x$, mapping it into $m\coloneqq\pi_x(a)$.
$A$ then sends the {\em message} $m$ to $B$ through a public communication channel.
After receiving the message, $B$ measures their half of $\ket{\phi^+_{AB}}$ using ${\bf Q}_{m}^\eta$ and guesses $x$ is $0,1$ if they obtain measurement outcomes of  $Q_{0|m},Q_{1|m}$, respectively.
Combining everything, one can see why Eq.~\eqref{Eq:p-error} provides the error probability in this task. 
Finally, $B$ waives the round if the measurement outcome is from $Q_{\emptyset|m}$, whose probability is bounded by the inconclusiveness $\eta$.
}
\label{Fig} 
\end{figure}

Now we analyse the decoding statistics for a given pair of measurements ${\bf E}$.
Similar to the case of Eq.~\eqref{Eq:p-ensemble}, with a given encryption strategy $\permu$, $B$ receives the message $m$ with probability $p'_m = \sum_x p_x P\left(\pi_x^{-1}(m)\middle|x\right)=\sum_{x=0,1}{\rm tr}\left(E_{\pi^{-1}_{x}(m)|x}\right)/2d$. 
In this case, $B$ sees an effective state ensemble consisting of two states (i.e., for $x=0,1$) $\rho_{\pi_x^{-1}(m)|x} = {\rm tr}_A\left[\left(E_{\pi^{-1}_{x}(m)|x}\otimes \id_B\right)\proj{\phi_{AB}^+}\right]/P\left(\pi_x^{-1}(m)\middle|x\right)$, each with probability $p'_{x|m} = p_x P\left(\pi_x^{-1}(m)\middle|x\right)/p'_m$.
By choosing the best decoding strategy for each message $m$, $B$ has the following optimal achievable error probability with a given encryption strategy $\permu$:
\begin{multline}\label{Eq:p-error}
\perrorencrypt\left({\bf E},\permu,\eta\right)\coloneqq\\
\sum_{m=0}^{\cardout-1}p'_m\min_{\substack{(1-\eta)\id\le Q_0+Q_1\\Q_0+Q_1\le\id\\ Q_0\ge0, Q_1\ge0}}\sum_{x=0,1}p'_{x|m}{\rm tr}\left(\rho_{\pi_x^{-1}(m)|x}Q_x\right)\\
=\sum_{m=0}^{\cardout-1}\min_{\substack{(1-\eta)\id\le Q_0+Q_1\\Q_0+Q_1\le\id\\ Q_0\ge0, Q_1\ge0}}\frac{1}{2}\sum_{x=0,1}\bra{\phi_{AB}^+}\left(E_{\pi^{-1}_{x}(m)|x}\otimes Q_x\right)\ket{\phi_{AB}^+}.
\end{multline}
Maximising over all possible encryption strategies, we obtain 
\begin{align}
\Perrorencrypt({\bf E},\eta)
\coloneqq\max_{\permu}\perrorencrypt\left({\bf E},\permu,\eta\right),
\end{align}
which is the {\em worst-case (biggest) optimal average error probability} among all possible encryption strategies of $A$.
One can check that $\Perrorencrypt({\bf E},\eta)=0$ {\em if and only if we have unambiguous $\eta$-encryption for every encryption strategy}.
Namely, {\em for every} sent message $m$ {\em and} encryption strategy $\permu$, $B$ can always choose a decoding strategy ${\bf Q}_{m}^\eta$ to {\em faithfully} decode the classical information $x$, whenever they choose to decode.

It turns out that this encryption task is closely related to measurement complementarity, as we detail below:
\begin{result}\label{Result:Encryption}
Let ${\bf E}$ be a pair of measurements. Consider a bipartite setting with equal local dimension $d$.
There exists $\eta_*<1$ such that, for every $\eta_*\le\eta<1$,
\begin{align}
	\Perrorencrypt({\bf E},\eta) = \frac{1-\eta}{2d}C_{\rm POVM}({\bf E}).
\end{align}

\end{result}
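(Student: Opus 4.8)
The plan is to reduce the encryption error probability, round by round in the message $m$, to the semidefinite program defining $q_{\rm exc}$. First I would rewrite the last line of Eq.~\eqref{Eq:p-error} using the maximally entangled state identity $\bra{\phi^+_{AB}}(A\otimes B)\ket{\phi^+_{AB}}=\tfrac{1}{d}\,\tr(AB^{\top})$, together with the fact that $Q\mapsto Q^{\top}$ is a bijection of the feasible set $\{(Q_0,Q_1):Q_0,Q_1\ge0,\ (1-\eta)\id\le Q_0+Q_1\le\id\}$ (positivity and both operator inequalities are transpose invariant). This gives
\begin{align}\label{Eq:enc-rewrite}
\perrorencrypt({\bf E},\permu,\eta)=\frac{1}{2d}\sum_{m=0}^{\cardout-1}\ \min_{\substack{(1-\eta)\id\le Q_0+Q_1\le\id\\ Q_0,Q_1\ge0}}\ \sum_{x=0,1}\tr\!\big(E_{\pi_x^{-1}(m)|x}\,Q_x\big),
\end{align}
so each summand is an optimisation of exactly the form appearing in $\Perrorstate$, but with the (unnormalised) POVM elements in place of states.

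The core fact --- essentially the computation underlying Theorem~\ref{Result:ERE-Task-maintext}, and a pure semidefinite-programming statement in which trace normalisation of the operators plays no role --- is the rescaling lemma: for any positive semidefinite $N_0,N_1$ there is a threshold $\eta_*(N_0,N_1)<1$ such that
\begin{align}\label{Eq:enc-lemma}
\min_{\substack{(1-\eta)\id\le Q_0+Q_1\le\id\\ Q_0,Q_1\ge0}}\ \sum_{x=0,1}\tr(N_xQ_x)=(1-\eta)\,q_{\rm exc}(\{N_0,N_1\})\qquad\text{for all }\eta_*\le\eta<1.
\end{align}
For ``$\ge$'': given a feasible $(Q_0,Q_1)$, the pair $Y_x\coloneqq Q_x/(1-\eta)$ satisfies $Y_0+Y_1\ge\id$ and $Y_x\ge0$, hence is feasible for the dual of Eq.~\eqref{Eq:C-function} (derived in the Appendix), which for two operators reads $q_{\rm exc}(\{N_0,N_1\})=\min\{\tr(N_0Y_0)+\tr(N_1Y_1):Y_0+Y_1\ge\id,\ Y_0,Y_1\ge0\}$; therefore $\sum_x\tr(N_xQ_x)=(1-\eta)\sum_x\tr(N_xY_x)\ge(1-\eta)q_{\rm exc}$. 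For ``$\le$'': take a dual-optimal $(Y_0^\star,Y_1^\star)$ and set $Q_x\coloneqq(1-\eta)Y_x^\star$; this is primal feasible precisely once $(1-\eta)\|Y_0^\star+Y_1^\star\|\le1$, and then it attains the value $(1-\eta)q_{\rm exc}$.

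The main obstacle is the threshold itself. The lower inequality in \eqref{Eq:enc-lemma} holds for every $\eta\in[0,1)$, so everything hinges on the upper one, which requires a dual-optimal pair $(Y_0^\star,Y_1^\star)$ whose norm $\|Y_0^\star+Y_1^\star\|$ is finite and controlled in terms of $N_0,N_1$. Since the dual feasible set is unbounded below and the $N_x$ need not be full rank, a priori the excess $Y_0+Y_1-\id$ can be inflated inside the kernels of the $N_x$ at no cost to the objective, so one must argue that among the dual optima there is one whose excess is confined to $\ker N_0\cap\ker N_1$; deleting that irrelevant part leaves a bounded optimiser (on the common support the $N_x$ have a strictly positive smallest eigenvalue, which converts the objective bound $q_{\rm exc}$ into a norm bound). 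This is exactly the delicate point already settled in the appendix proof of Theorem~\ref{Result:ERE-Task-maintext}, and I would invoke that argument here for the finitely many operator pairs $(E_{\pi_0(a)|0},E_{\pi_1(a)|1})$ occurring in \eqref{Eq:enc-rewrite}; taking $\eta_*$ to be the largest of those finitely many thresholds yields a single $\eta_*<1$.

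Finally I would assemble the pieces. For $\eta_*\le\eta<1$, applying \eqref{Eq:enc-lemma} to each summand of \eqref{Eq:enc-rewrite} gives $\perrorencrypt({\bf E},\permu,\eta)=\frac{1-\eta}{2d}\sum_{m}q_{\rm exc}(\{E_{\pi_x^{-1}(m)|x}\}_x)$. Maximising over $\permu$ and relabelling $\pi_x\leftrightarrow\pi_x^{-1}$ (a bijection of tuples of permutations) together with $m\leftrightarrow a$ turns this into $\Perrorencrypt({\bf E},\eta)=\frac{1-\eta}{2d}\max_{\permu}\sum_{a}q_{\rm exc}(\{E_{\pi_x(a)|x}\}_x)=\frac{1-\eta}{2d}\,C_{\rm POVM}({\bf E})$, which is the claim.
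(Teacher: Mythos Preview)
Your proposal is correct and follows essentially the same route as the paper: rewrite the encryption error using $\bra{\phi^+}(A\otimes B)\ket{\phi^+}=\tfrac{1}{d}\tr(AB^\top)$, apply the rescaling identity $D_\eta=(1-\eta)q_{\rm exc}$ term by term, and maximise over $\permu$. The paper's proof is just a two-line invocation of Corollary~\ref{CDLemma} (which packages exactly your rescaling lemma~\eqref{Eq:enc-lemma}), so you have simply unpacked that ingredient inline.

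Two minor remarks. First, the threshold argument you point to lives in Lemma~\ref{lemma:characterisation}, not in the proof of Theorem~\ref{Result:ERE-Task-maintext}; the latter only \emph{uses} the lemma. Second, your discussion about confining the excess $Y_0+Y_1-\id$ to $\ker N_0\cap\ker N_1$ is more than the paper does: the paper simply takes \emph{any} dual optimiser $\{Q_x^{\rm opt}\}$ and sets $\eta_*=1-1/\|\sum_x Q_x^{\rm opt}\|_\infty$, without worrying about selecting a minimal-norm one. Your kernel argument is a valid way to justify existence of a bounded optimiser when the $N_x$ are not full rank, but it is not needed once one accepts that the dual minimum is attained.
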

See Appendix for the proof.
Consequently, being {\em unambiguous} for {\em all} encryption strategies in entanglement-assisted $\eta$-encryption tasks is a {\em necessary and sufficient} condition for {\em a pair of} measurements to be complementary.
Theorem~\ref{Result:Encryption} not only provides a task-based operational interpretation of measurement complementarity, but also suggests the following resource equality:
\begin{equation*}
\text{complementarity} + \text{entanglement} = \text{encryption advantage}.
\end{equation*}

A natural question is whether the unambiguous task can also be {\em deterministic}.
This is, however, {\em impossible}: in Appendix we prove that, for every pair of measurements ${\bf E}$, 
\begin{align}\label{Eq:Result:No-Go}
\Perrorencrypt({\bf E},\eta=0) > 0.
\end{align}
Hence, it is impossible to be unambiguous and deterministic for {\em all possible} encryption strategies $\permu$ and sent message $m$.
Note that, however, if $A$ always uses a single {\em fixed} encryption strategy, then it is possible to be unambiguous and deterministic at the same time.

As a simple example to illustrate Theorem~\ref{Result:Encryption}, consider Pauli $Z$ and $X$ measurements, i.e.,
$
E_{0|0} = \proj{0}, E_{1|0}=\proj{1}, E_{0|1} = \proj{+}, E_{1|1}=\proj{-}.
$
They provide unambiguous $\eta$-encryption, $\Perrorencrypt({\bf E},\eta) = 0$, with
$
2/(1+\sqrt{2})\le\eta<1
$
(see Appendix).
Hence, Theorem~\ref{Result:Encryption} shows that this constitutes a proof of complementarity.
When $\eta=2/(1+\sqrt{2})$, the optimal decoding measurements for the encryption strategy $\permu=\{\pi_0,\pi_1\}$ and message $m$  are (see Appendix)
\begin{align}\label{Eq:DecodingStrategy}
Q_{0|m} = \frac{\sqrt{2}\left(\id-E_{\pi^{-1}_1(m)|1}\right)^\intercal}{1+\sqrt{2}},Q_{1|m} = \frac{\sqrt{2}\left(\id-E_{\pi^{-1}_0(m)|0}\right)^\intercal}{1+\sqrt{2}},
\end{align}
where $(\cdot)^\intercal$ is the transpose map.
Finally, for this level of inconclusiveness $\eta$, the probability of successfully decoding is 
$
\sum_{m}P(x_{\rm out} = x,m)\approx0.2929 
$
(see Appendix).

\section*{Relations with measurement incompatibility and quantum steering}
Finally, we show that quantum complementarity can be viewed as a strong notion of {\em measurement incompatibility}~\cite{Incom-review} and {\em quantum steering}~\cite{UolaRMP2020,Cavalcanti_2017-review}.
We start with the case of measurements.
A measurement assemblage ${\bf E}$ is said to be {\em compatible} (or \emph{jointly measurable}) if there exists a `parent' POVM $\{G_\lambda\}_\lambda$ and probability distributions $\{P(a|x,\lambda)\}_{a,x,\lambda}$ such that
$
E_{a|x} = \sum_\lambda P(a|x,\lambda)G_\lambda$ $\forall\,a,x.
$
Namely, a single measurement plus classical post-processing can implement ${\bf E}$.
${\bf E}$ is said to be {\em incompatible} if it is not compatible.
One way to quantify incompatability is via the so-called {\em incompatibility weight}~\cite{PuseyJOPSA15} ${\rm IW}({\bf E})\coloneqq1-\max\{0\le q\le1\;|\;E_{a|x}\ge q L_{a|x}\;\forall\;a,x\;\&\;\{L_{a|x}\}_{a,x} \text{ is compatible}\}$.
Roughly speaking, $1-{\rm IW}({\bf E})$ measures the largest amount of compatibility contained in ${\bf E}$ through a convex mixture.
Hence, ${\rm IW}({\bf E})=0$ if and only if ${\bf E}$ is compatible.
As proved in Appendix, ${\rm IW}$ is in fact related to $C_{\rm POVM}$ as follows:
\begin{result}\label{Result:IncompatibilityWeight}
For every $d$-dimensional measurement assemblage ${\bf E}$, with $\cardin$ inputs and $\cardout$  outcomes, we have
\begin{align}
1-{\rm IW}({\bf E})\le\frac{\cardout^\cardin}{d}C_{\rm POVM}({\bf E}).
\end{align}
\end{result}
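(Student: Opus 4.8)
The plan is to start from the primal definition of the incompatibility weight. Writing $q_* \coloneqq 1-{\rm IW}({\bf E})$, by definition there is a compatible assemblage $\{L_{a|x}\}_{a,x}$ with $E_{a|x}\ge q_* L_{a|x}$ for all $a,x$. I would then invoke the standard fact from the joint-measurability literature that any compatible assemblage of $\cardin$ measurements with $\cardout$ outcomes each can be reproduced from a single parent POVM using \emph{deterministic} post-processings; concretely we may take $\{G_\lambda\}_{\lambda}$ indexed by the assignments $\lambda\in\{0,\dots,\cardout-1\}^\cardin$, with $L_{a|x}=\sum_{\lambda:\,\lambda_x=a}G_\lambda$. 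Since $\{G_\lambda\}_\lambda$ is a POVM on a $d$-dimensional space, $\sum_\lambda{\rm tr}(G_\lambda)=d$ over exactly $\cardout^\cardin$ non-negative terms, so there is a tuple $\lambda_0$ with ${\rm tr}(G_{\lambda_0})\ge d/\cardout^\cardin$.

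The core step is to feed $G_{\lambda_0}$ into the semi-definite program defining $q_{\rm exc}$. First I would note that any single tuple can be placed on the ``diagonal'' of a permutation family: choose $\permu=\{\pi_x\}_x$ with $\pi_x(0)=(\lambda_0)_x$ for every $x$. For each outcome $a$, set $\lambda^{(a)}\coloneqq(\pi_0(a),\dots,\pi_{\cardin-1}(a))$. Then $P=q_* G_{\lambda^{(a)}}$ is feasible for $q_{\rm exc}(\{E_{\pi_x(a)|x}\}_x)$: it is positive semi-definite, and for each $x$ one has $E_{\pi_x(a)|x}\ge q_* L_{\pi_x(a)|x}=q_*\sum_{\mu:\,\mu_x=\pi_x(a)}G_\mu\ge q_* G_{\lambda^{(a)}}$, where the last step drops all but the $\mu=\lambda^{(a)}$ term of a sum of positive operators (valid since $\lambda^{(a)}_x=\pi_x(a)$). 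Hence $q_{\rm exc}(\{E_{\pi_x(a)|x}\}_x)\ge q_*\,{\rm tr}(G_{\lambda^{(a)}})$ for every $a$.

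Putting these together finishes the proof. For the particular $\permu$ above, and keeping only the $a=0$ term (for which $\lambda^{(0)}=\lambda_0$ by construction, the remaining terms being non-negative),
\begin{equation*}
C_{\rm POVM}({\bf E})\ \ge\ \sum_{a=0}^{\cardout-1}q_{\rm exc}(\{E_{\pi_x(a)|x}\}_x)\ \ge\ q_*\,{\rm tr}(G_{\lambda_0})\ \ge\ \frac{q_*\,d}{\cardout^\cardin},
\end{equation*}
and rearranging gives $1-{\rm IW}({\bf E})=q_*\le(\cardout^\cardin/d)\,C_{\rm POVM}({\bf E})$.

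The argument is short, and the only delicate points are bookkeeping ones: (i) the reduction of the optimal compatible model to a parent POVM with exactly $\cardout^\cardin$ deterministic outcomes — this is where the combinatorial factor $\cardout^\cardin$ enters, and it presupposes a common outcome count $\cardout$ for all measurements; and (ii) matching the product index set $\{0,\dots,\cardout-1\}^\cardin$ of that parent POVM with the permutation/``diagonal'' structure built into $C_{\rm POVM}$, so that the weight ${\rm tr}(G_{\lambda_0})$ is actually collected by the maximisation over $\permu$. The only genuine idea is the observation that each $q_{\rm exc}$ term in $C_{\rm POVM}$ is lower bounded by the weight $q_*{\rm tr}(G_\lambda)$ of the corresponding outcome of the best compatible parent POVM; once that is in place, a trivial averaging (the maximum is at least the average over the $\cardout^\cardin$ weights) closes the bound. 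I do not anticipate a serious obstacle beyond getting the index conventions straight.
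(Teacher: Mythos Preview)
Your proof is correct and follows essentially the same route as the paper's: both reduce the compatible model to a parent POVM indexed by the $\cardout^\cardin$ deterministic response functions, and both use that $q_*G_\lambda$ is feasible for $q_{\rm exc}(\{E_{\lambda_x|x}\}_x)$. The only cosmetic difference is that the paper relaxes the optimisation and sums $q_{\rm exc}(\{E_{b_x|x}\}_x)/d$ over all $\cardout^\cardin$ tuples $\bm b$ (bounding each by $C_{\rm POVM}/d$), whereas you extract a single heaviest element $G_{\lambda_0}$ via pigeonhole and bound $C_{\rm POVM}$ below by $q_*\,{\rm tr}(G_{\lambda_0})$; these are the two sides of the same averaging inequality.
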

This shows that measurement complementarity implies measurement incompatibility. 
On the other hand, it provides a quantitative relationship, further justifying the claim that measurement complementarity is a strong notion of measurement incompatibility.
Finally, using the equivalence between quantum steering and measurement incompatibility~\cite{Uola2015PRL,UolaRMP2020}, Theorem~\ref{Result:ExclusionTask} and Theorem~\ref{Result:IncompatibilityWeight} shows that the notion of ensemble complementarity introduced here can also be viewed as a new, strong notion of quantum steering.
In particular, we see that maximal steerability can be understood as equivalent to the ability of an assemblage to perform perfect unambiguous ensemble exclusion.

\begin{figure}
\scalebox{1.0}{\includegraphics{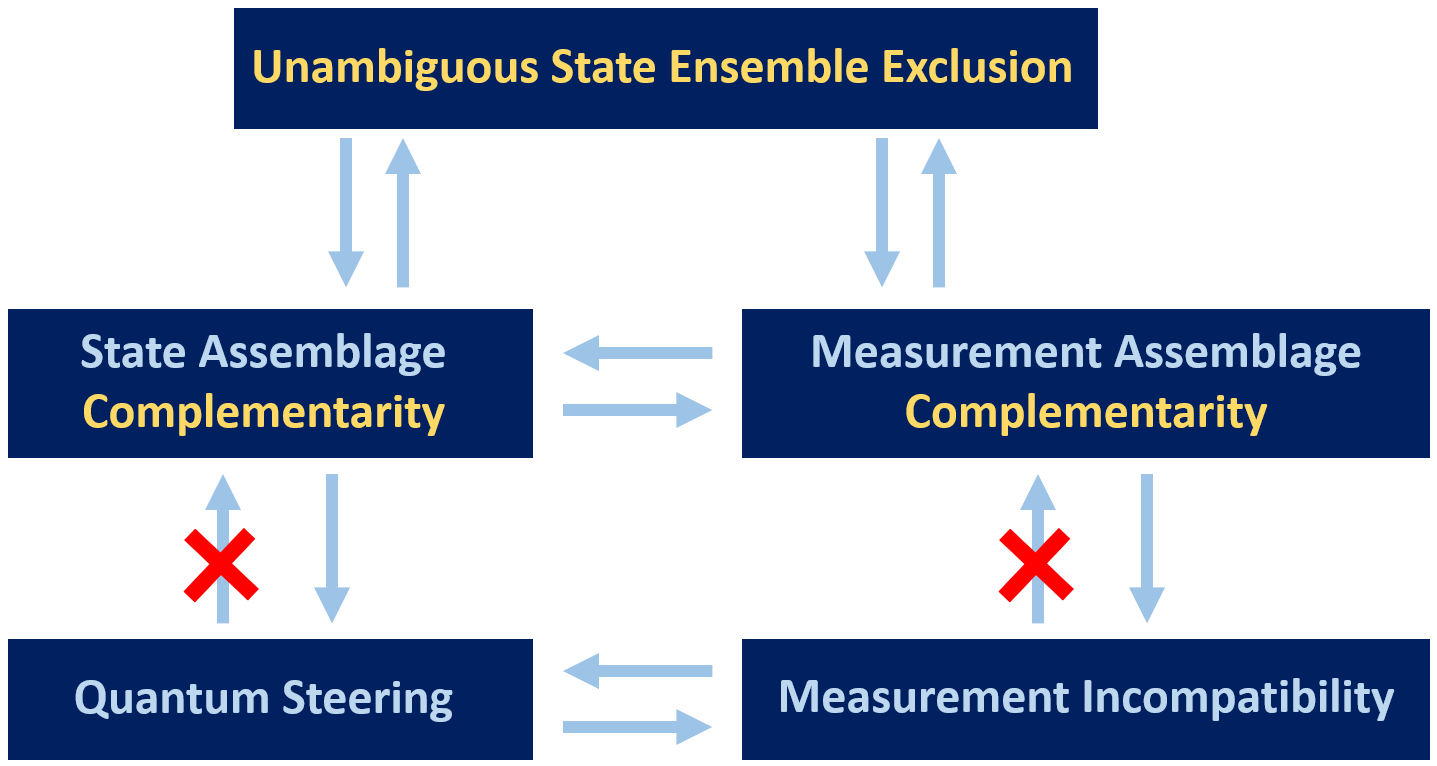}}
\caption{
{\bf A map of relation between quantum complementarity, unambiguous exclusion tasks, and other quantum resources.}
In this work, we show that quantum complementarity of state ensembles and measurements are both operationally equivalent to $\eta$-unambiguous exclusion, a novel task introduced here.
Moreover, our analysis shows that quantum complementarity can be viewed as a strong notion of measurement incompatibility as well as quantum steering.
This not only bridges quantum complementarity and known quantum resources, but also provide a novel operational way to study incompatibility and steering.
}
\label{Fig:ERE} 
\end{figure}

\section*{Discussions}
In this work, we have studied quantum complementarity and shown how it can be understood as a novel type of quantum resource.
In particular, we have given a complete analytical quantification of complementarity here. To do so, we introduce the novel class of $\eta$-unambiguous exclusion tasks, and show that they precisely allow us to quantify quantum complementarity. 
Our results have a number of important physical implications.

First, by viewing quantum complementarity as a strong notion of measurement incompatibility and quantum steering, our findings suggest that strong enough incompatibility and steering can provide advantages in some unambiguous exclusion tasks as well as encryption tasks.
Second, Theorem~\ref{Result:Exclusion-SE-maintext} gives a natural {\em one-sided device-independent} (1SDI) scenario that can certify ensemble complementarity.  
In the same direction, by combining this insight with Theorem~\ref{Result:ExclusionTask}, we can further conclude that quantum complementarity can be certified in an 1SDI way. 
This is consistent with the fact that certain types of quantum complementarity can be certified in a fully device-independent way, specifically those that can be self-tested~\cite{Supic2020}; for instance, spin measurements in $X,Z$ directions can be self-tested by Clauser-Horne-Shimony-Holt (CHSH) inequality~\cite{CHSH,Supic2020}.
Third, we remark that our results naturally apply to assemblages that are {\em not} complementary.
For instance, Theorem~\ref{Result:Encryption} implies that ${\bf E}$ can perform {\em almost} unambiguous entanglement-assisted $\eta$-encryption when ${\bf E}$ is almost complementary in this sense.
Finally, the tasks reported in this work are all feasible with current technology. 
Hence, our findings provide a recipe to experimentally certify quantum complementarity, which in turn would constitute experimental witness of the associated strong  measurement incompatibility and quantum steering.

Going forward, we first note that an alternative definition of measurement complementarity considers also outcomes such corresponding to \emph{coarse-grainings}, e.g., for the measurement outcome corresponding to $E_{1|0}+E_{2|0}$, and require $M=0$ when $E_{1|0}+E_{2|0}\ge M$ and $E_{b|1}\ge M$. 
As we discuss in a forthcoming companion paper~\cite{Hsieh-WIP}, similar results also can be proved when using this stricter definition of complementarity.
It would also be interesting to study $\eta$-unambiguous exclusion tasks in their own right, and the closely related \mbox{$\eta$-unambiguous} discrimination task.
Recent works showed that one possible way to extend exclusion-based tasks is using ideas from economics~\cite{Ducuara2023,Ducuara2023-2,Ducuara2022PRXQ}. 
It would be interesting to apply similar ideas to $\eta$-unambiguous tasks. 
Furthermore, following recent findings of steering distillation~\cite{Nery2020PRL,Liu2022,Ku2022NC,Ku2023,Hsieh2023}, it is useful to further study distillation of quantum complementarity, which could enhance quantum advantages in 1SDI protocols.
Finally, it would be rewarding to study the ability of quantum dynamics to {\em preserve} quantum complementarity~\cite{Hsieh2020,Hsieh2021PRXQ,Ku2022PRXQ} and its implications to information transmission.

\subsection*{Acknowledgements}
We thank Gelo Noel Tabia and Daniel Urrego for fruitful discussion and comments.
C.-Y.~H.~is supported by the Royal Society through Enhanced Research Expenses (on grant NFQI) and the ERC Advanced Grant (FLQuant). 
R.~U.~is thankful for the financial support from the Swiss National Science Foundation (Ambizione PZ00P2- 202179). 
P.~S.~is a CIFAR Azrieli Global Scholar in the Quantum Information Science Programme.

\section*{Appendix}
\subsection*{Properties of exclusive mutual information}
\begin{alemma}\label{Result:Properties}
Let $\stEns = \{\rho_x\}_x$ be a set of states.
Then we have
\begin{enumerate}
\item (Non-Negativity) $I_{\rm exc}(\stEns)\ge0$, and the equality holds if $\rho_x = \rho_y$ for every $x,y$.
\item (Data-Processing Inequality) For every channel $\mathcal{E}$, we have $I_{\rm exc}[\{\mathcal{E}(\rho_x)\}_x]\le I_{\rm exc}(\stEns)$.
\end{enumerate}
\end{alemma}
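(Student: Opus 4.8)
The plan is to work throughout with $q_{\rm exc}$ itself and only convert to $I_{\rm exc} = -\log_2 q_{\rm exc}$ at the end, using that $-\log_2$ is monotonically decreasing; then both statements become elementary facts about the semi-definite program in Eq.~\eqref{Eq:C-function}.

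For non-negativity, I would first invoke the bound noted just below Eq.~\eqref{Eq:C-function}, namely $q_{\rm exc}(\{N_x\}_x)\le\min_x\tr(N_x)$. Applied to $\stEns=\{\rho_x\}_x$ with each $\rho_x$ a normalised state this gives $q_{\rm exc}(\stEns)\le1$, hence $I_{\rm exc}(\stEns)=-\log_2 q_{\rm exc}(\stEns)\ge0$ (with the convention that $I_{\rm exc}=+\infty$ when $q_{\rm exc}=0$). For the equality case, if $\rho_x=\rho$ for all $x$ then the choice $P=\rho$ is feasible in Eq.~\eqref{Eq:C-function}, since it satisfies $\rho_x\ge P\ge0$ for every $x$; this yields $q_{\rm exc}(\stEns)\ge\tr(\rho)=1$, which together with the upper bound forces $q_{\rm exc}(\stEns)=1$ and therefore $I_{\rm exc}(\stEns)=0$.

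For the data-processing inequality, since $-\log_2$ is decreasing it suffices to prove $q_{\rm exc}[\{\mathcal{E}(\rho_x)\}_x]\ge q_{\rm exc}(\stEns)$ for every channel $\mathcal{E}$. Let $P$ be an optimiser of Eq.~\eqref{Eq:C-function} for $\stEns$, which exists because the feasible set $\{P:0\le P\le\rho_x\ \forall x\}$ is compact, so that $\rho_x\ge P\ge0$ for all $x$ and $\tr(P)=q_{\rm exc}(\stEns)$. Because $\mathcal{E}$ is completely positive (in particular positive), $\rho_x-P\ge0$ implies $\mathcal{E}(\rho_x)-\mathcal{E}(P)=\mathcal{E}(\rho_x-P)\ge0$, and also $\mathcal{E}(P)\ge0$; hence $P':=\mathcal{E}(P)$ is a feasible point for $q_{\rm exc}[\{\mathcal{E}(\rho_x)\}_x]$. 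Since $\mathcal{E}$ is trace-preserving, $\tr(P')=\tr(P)$, so $q_{\rm exc}[\{\mathcal{E}(\rho_x)\}_x]\ge\tr(P')=\tr(P)=q_{\rm exc}(\stEns)$, and passing to $-\log_2$ gives $I_{\rm exc}[\{\mathcal{E}(\rho_x)\}_x]\le I_{\rm exc}(\stEns)$.

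I do not expect a genuine obstacle here: the content is simply that a channel maps feasible points of the program to feasible points while preserving the trace objective. The only points that warrant a line of care are the appeal to compactness guaranteeing that the maximum in Eq.~\eqref{Eq:C-function} is attained (so that ``the optimiser $P$'' is well defined, rather than having to argue with an approximating sequence), and correctly tracking the reversal of the inequality direction when composing with $-\log_2$.
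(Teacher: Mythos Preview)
Your proposal is correct and follows essentially the same approach as the paper: both parts reduce to the observation that any feasible $P$ in Eq.~\eqref{Eq:C-function} has $\tr(P)\le 1$ for states (with $P=\rho$ saturating when all $\rho_x$ coincide), and that a channel maps feasible points to feasible points while preserving the trace. The only cosmetic difference is that the paper phrases the data-processing step as a chain of inequalities on the maximisation ranges rather than picking an explicit optimiser, so your remark about compactness is a small extra care not spelled out there.
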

\begin{proof}
A feasible point $P$ of Eq.~\eqref{Eq:C-function} must satisfy \mbox{${\rm tr}(P)\le{\rm tr}(N_x)$} $\forall\;x$.
This means ${\rm tr}(P)\le1$ when we have quantum states, i.e.,  $I_{\rm exc}(\stEns)\ge0$.
When $\rho_x = \rho_y = \eta$ $\forall\;x,y$, we can choose $P=\eta$ to achieve $I_{\rm exc}(\stEns)=0$.
Finally, the data-processing inequality follows by a direct computation for every channel $\mathcal{E}$ (also known as {\em completely-positive trace-preserving} linear map~\cite{QIC-book}):
$
q_{\rm exc}[\{\mathcal{E}(\rho_x)\}_x]\ge\max\left\{{\rm tr}[\mathcal{E}(P)]\,|\,\mathcal{E}(\rho_x)\ge \mathcal{E}(P), P\ge0\right\}
\ge\max\left\{{\rm tr}(P)\,|\,\rho_x\ge P, P\ge0\right\}=q_{\rm exc}(\stEns).
$
Here, we change maximisation ranges from $\{P\,|\,P\ge0\}$ to $\{\mathcal{E}(P)\,|\,P\ge0\}$, and then use $\{P\ge0\,|\,\rho_x\ge P\}\subseteq\{P\ge0\,|\,\mathcal{E}(\rho_x)\ge\mathcal{E}(P)\}$.
\end{proof}
Lemma~\ref{Result:Properties} suggests that $I_{\rm exc}$ can be viewed as a mutual information.
As another important feature, $q_{\rm exc}$ is actually a semi-definite program~\cite{Watrous-book,SDP-textbook} with the following explicit dual form ($\Nset = \{N_x\}_x$ denotes a set of positive semi-definite operators):
\begin{align}\label{Eq:C-function-dual}
q_{\rm exc}(\Nset) = \min_{\substack{\sum_xQ_x\ge\id\\Q_x\ge0\;\;\forall\;x}}\sum_x{\rm tr}(Q_xN_x).
\end{align}
\begin{proof}
Consider the Lagrangian
$
\mathcal{L}(P,\{Q_x\}_x,R)\coloneqq{\rm tr}(P) + \sum_x{\rm tr}\left[Q_x\left(N_x-P\right)\right] + {\rm tr}(RP).
$
Then one can check that
$
q_{\rm exc}(\Nset) = \max_{P\ge0}\min_{\substack{Q_x\ge0\;\forall\,x\\R\ge0}}\mathcal{L}(P,\{Q_x\}_x,R).
$
Switching the order of minimisation and maximisation gives us the dual problem
$
\min_{\substack{Q_x\ge0\;\forall\,x\\R\ge0}}\max_{P\ge0}\mathcal{L}(P,\{Q_x\}_x,R),
$
where
\begin{align}\nonumber
\max_{P\ge0}\mathcal{L}(P,\{Q_x\}_x,R) = 
\begin{cases}
\sum_x{\rm tr}(Q_xN_x)&\text{if}\,\id+R\le \sum_xQ_x;\\
\infty&\text{otherwise}.
\end{cases}
\end{align}
From here we conclude that Eq.~\eqref{Eq:C-function-dual} is the dual.
Finally, one can check that the primal problem Eq.~\eqref{Eq:C-function} is finite and feasible (e.g., by choosing $P=0$), and the above dual problem is strictly feasible; namely, there exist $Q_x>0$ with $\sum_xQ_x>\id$ (for example, this can be achieved by choosing $Q_x=\id$ for every $x$).
Hence, by Slater's condition (e.g., Theorem 1.18 in Ref.~\cite{Watrous-book}), the strong duality holds, and the primal and the dual output the same optimal value.
\end{proof}

Now we report the following representation of $q_{\rm exc}$, which is the major tool for us to explore various exclusion tasks.
\begin{alemma}\label{lemma:characterisation}
There is $\eta_*<1$ such that, for every \mbox{$\eta_*\le\eta<1$,}
\begin{align}\label{Eq:D-function}
(1-\eta)q_{\rm exc}\left(\Nset\right)=D_\eta(\Nset)
\coloneqq\min_{\substack{
(1-\eta)\id\le \sum_xQ_x\le\id\\Q_x\ge0\;\;\forall\;x}}\sum_x{\rm tr}\left(Q_xN_x\right).
\end{align}
\end{alemma}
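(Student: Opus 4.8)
The plan is to establish the two inequalities $(1-\eta)q_{\rm exc}(\Nset)\le D_\eta(\Nset)$ and $D_\eta(\Nset)\le(1-\eta)q_{\rm exc}(\Nset)$ (the latter only for $\eta$ close enough to $1$), using the dual form Eq.~\eqref{Eq:C-function-dual} of $q_{\rm exc}$ as the workhorse: indeed $D_\eta$ is obtained from Eq.~\eqref{Eq:C-function-dual} by rescaling the lower bound on $\sum_xQ_x$ by $1-\eta$ and imposing the extra upper bound $\sum_xQ_x\le\id$. For ``$\ge$'' I would take any $\{Q_x\}_x$ feasible for $D_\eta(\Nset)$ and note that $\{Q_x/(1-\eta)\}_x$ is feasible for Eq.~\eqref{Eq:C-function-dual}, so $q_{\rm exc}(\Nset)\le(1-\eta)^{-1}\sum_x\tr(Q_xN_x)$; minimising over $\{Q_x\}_x$ gives $(1-\eta)q_{\rm exc}(\Nset)\le D_\eta(\Nset)$, valid for all $\eta\in[0,1)$.

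For ``$\le$'' the idea is the reverse rescaling. Suppose $\{Q^\star_x\}_x$ is an optimal point of Eq.~\eqref{Eq:C-function-dual}, so $Q^\star_x\ge0$, $\sum_xQ^\star_x\ge\id$, and $\sum_x\tr(Q^\star_xN_x)=q_{\rm exc}(\Nset)$. Then $\{(1-\eta)Q^\star_x\}_x$ already satisfies $(1-\eta)Q^\star_x\ge0$ and $\sum_x(1-\eta)Q^\star_x\ge(1-\eta)\id$, and it satisfies the remaining constraint $\sum_x(1-\eta)Q^\star_x\le\id$ exactly when $(1-\eta)\norm{\sum_xQ^\star_x}\le1$. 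Since $\sum_xQ^\star_x\ge\id$ forces $\norm{\sum_xQ^\star_x}\ge1$, I would set $\eta_*:=1-\norm{\sum_xQ^\star_x}^{-1}\in[0,1)$; then for every $\eta_*\le\eta<1$ the point $\{(1-\eta)Q^\star_x\}_x$ is feasible for $D_\eta(\Nset)$ with objective value $(1-\eta)\sum_x\tr(Q^\star_xN_x)=(1-\eta)q_{\rm exc}(\Nset)$, giving $D_\eta(\Nset)\le(1-\eta)q_{\rm exc}(\Nset)$. Combining the two inequalities proves Eq.~\eqref{Eq:D-function}.

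The only nontrivial input --- and the main obstacle --- is that Eq.~\eqref{Eq:C-function-dual} is \emph{attained} at some $\{Q^\star_x\}_x$ with \emph{finite} operator norm $\norm{\sum_xQ^\star_x}$; note this norm is precisely what determines the critical inconclusiveness $\eta_*$ (which therefore depends on $\Nset$). Strong duality (Slater's condition, already used below Eq.~\eqref{Eq:C-function-dual}) gives that the primal $q_{\rm exc}(\Nset)=\max\{\tr P\mid 0\le P\le N_x\ \forall x\}$ is attained at some $P^\star$, its feasible set being compact. If every $N_x$ is full rank the dual objective is coercive ($\sum_x\tr(Q_xN_x)\ge\mu\,\tr(\sum_xQ_x)$ with $\mu:=\min_x\mu_{\rm min}(N_x)>0$), so its sublevel sets are bounded and the minimum is attained at a bounded point. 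The rank-deficient case is the delicate one, since SDP minima need not be attained in general; there I would argue as follows. First, $\bigcap_x\operatorname{supp}(N_x-P^\star)=\{0\}$: otherwise a unit vector $\ket{w}$ in this intersection would satisfy $\epsilon\ket{w}\bra{w}\le N_x-P^\star$ for all $x$ for some $\epsilon>0$, so $P^\star+\epsilon\ket{w}\bra{w}$ would be a strictly better feasible point, contradicting optimality of $P^\star$; equivalently, the subspaces $\ker(N_x-P^\star)$ span the whole space. One then constructs a dual-optimal $\{Q^\star_x\}_x$ with $\operatorname{supp}(Q^\star_x)\subseteq\ker(N_x-P^\star)$ --- which makes $\tr(Q^\star_x(N_x-P^\star))=0$, hence $\sum_x\tr(Q^\star_xN_x)=\tr\big[(\sum_xQ^\star_x)P^\star\big]$ --- and with $\sum_xQ^\star_x\ge\id$ acting as $\id$ on $\operatorname{supp}(P^\star)$, which makes the latter equal $\tr P^\star=q_{\rm exc}(\Nset)$; any such point is bounded. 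Checking that these constraints can be met simultaneously --- equivalently, ruling out a ``duality gap at infinity'' for this particular SDP --- is the technical heart of the argument; a viable alternative is a perturbation argument, solving the full-rank problem for $N_x+\delta\id$ and controlling the limit $\delta\to0^+$.
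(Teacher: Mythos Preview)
Your proof is essentially the paper's: both inequalities come from rescaling between the dual form Eq.~\eqref{Eq:C-function-dual} and $D_\eta$, and the paper sets $\eta_*\coloneqq 1-\norm{\sum_xQ^{\rm opt}_x}_\infty^{-1}$ exactly as you propose. You are in fact more careful than the paper on one point: the paper simply asserts the existence of a dual minimiser $\{Q_x^{\rm opt}\}$ without comment, whereas you correctly observe that the Slater argument given below Eq.~\eqref{Eq:C-function-dual} (dual strictly feasible, primal bounded and feasible) only guarantees attainment of the \emph{primal}; your complementary-slackness and perturbation sketches for closing this gap in the rank-deficient case are reasonable, though---as you yourself acknowledge---not fully carried out.
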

\begin{proof}
Write $q_{\rm exc}\left(\Nset\right) = \sum_x{\rm tr}\left(Q_{x}^{\rm opt}N_x\right)$ with $\id\le\sum_xQ_{x}^{\rm opt}$, where $Q_x^{\rm opt}$'s are optimal operators achieving Eq.~\eqref{Eq:C-function-dual}.
Let 
\begin{align}\label{Eq:epsilon_*}
\eta_*\coloneqq1-\frac{1}{\norm{\sum_xQ_{x}^{\rm opt}}_\infty},
\end{align}
which satisfies $0\le\eta_*<1$.
Then for every $\eta_*\le\eta<1$ and every $x$, consider
$
Q_x^{(\eta)}\coloneqq(1-\eta)Q_x^{\rm opt}.
$
One can check that
$
(1-\eta)\id\le\sum_xQ_x^{(\eta)}\le(1-\eta_*)\sum_xQ_{x}^{\rm opt}\le\id,
$
meaning that $\{Q_x^{(\eta)}\}_x$ is a feasible point of the minimisation in  Eq.~\eqref{Eq:D-function}.
Hence,
$
D_\eta(\Nset)\le\sum_x{\rm tr}\left(Q_x^{(\eta)}N_x\right)
=(1-\eta)q_{\rm exc}\left(\Nset\right).
$
On the other hand,
\begin{align}
(1-\eta)q_{\rm exc}\left(\Nset\right)&\le(1-\eta)\min_{\substack{\id\le \sum_xQ_x\le\frac{1}{1-\eta}\id\\Q_x\ge0\;\forall\,x}}\sum_x{\rm tr}\left(Q_xN_x\right)\nonumber\\
&=\min_{\substack{(1-\eta)\id\le\sum_xQ_x\le\id\\Q_x\ge0\;\forall\,x}}\sum_x{\rm tr}\left(Q_xN_x\right)=D_\eta(\Nset),\nonumber
\end{align}
where the first inequality is the consequence of reducing the minimisation range.
The result thus follows.
\end{proof}

\subsection*{Proof of Theorem~\ref{Result:ERE-Task-maintext}}
\begin{proof}
By Lemma~\ref{lemma:characterisation}, there is $\eta_*<1$ such that,
\begin{align}
q_{\rm exc}(\stEns)
&=\min_{\prob>0}\frac{\min_{\substack{(1-\eta)\id\le \sum_xQ_x\le\id\\Q_x\ge0\;\forall\,x}}\sum_x\left(\min_yp_y\right){\rm tr}(Q_x\rho_x)}{\left(\min_yp_y\right)(1-\eta)}\nonumber\\
&\le\min_{\prob>0}\frac{\min_{\substack{(1-\eta)\id\le \sum_xQ_x\le\id\\Q_x\ge0\;\forall\,x}}\sum_xp_x{\rm tr}(Q_x\rho_x)}{\Perrorclass\left(\prob,\eta\right)}\nonumber
\end{align}
for every $\eta_*\le\eta<1$.
Using Lemma~\ref{lemma:characterisation} again, we conclude that this bound can be saturated by $\{p_x = 1/\cardin\}_x$.
\end{proof}
Finally, we remark that Theorem~\ref{Result:ERE-Task-maintext} also holds when we replace $\stEns$ by $\Nset$, a set of positive semi-definite operators.

\subsection*{Proof of Theorem~\ref{Result:Exclusion-SE-maintext}}
\begin{proof}
First, it is useful to note that we can write
\begin{align}\label{Eq:useful}
\Perrorens\left({\bm\sigma},\prob,\eta\right)=\max_{\permu}\sum_{a=0}^{\cardout-1}D_\eta(\{p_x\sigma_{\pi_x(a)|x}\}_x),
\end{align}
where $D_\eta$ is defined in Eq.~\eqref{Eq:D-function}.
By Lemma~\ref{lemma:characterisation}, for every index set $\{\pi_x(a)\}_x$, there exists \mbox{$\eta_{*}(\{\pi_x(a)\}_x)<1$} such that
\mbox{$
(1-\eta)q_{\rm exc}(\{\sigma_{\pi_x(a)|x}\}_x)=D_\eta(\{\sigma_{\pi_x(a)|x}\}_x)
$}
for every $\eta_{*}(\{\pi_x(a)\}_x)\le\eta<1$.
Let 
$
\eta_*\coloneqq\max_{\permu,a}\eta_{*}(\{\pi_x(a)\}_x).
$
Using Lemma~\ref{lemma:characterisation}, we have ($\probunif\coloneqq\{p_x = 1/\cardin\}_x$)
\begin{align}\label{Eq:CSE-relation}
&\frac{\Perrorens\left({\bm\sigma},\probunif,\eta\right)}{\Perrorclass(\probunif,\eta)}=\max_{\permu}\sum_{a=0}^{\cardout-1}\frac{D_\eta(\{\sigma_{\pi_x(a)|x}\}_x)}{1-\eta}\nonumber\\
&=\max_{\permu}\sum_{a=0}^{\cardout-1}q_{\rm exc}(\{\sigma_{\pi_x(a)|x}\}_x)=C_{\rm SE}({\bm\sigma})
\end{align}
for every \mbox{$\eta_*\le\eta<1$,}
this proves the inequality `$\ge$' in Theorem~\ref{Result:Exclusion-SE-maintext}.
To see the inequality `$\le$', Eq.~\eqref{Eq:CSE-relation} implies
\begin{align}
&C_{\rm SE}({\bm\sigma}) = \max_{\permu}\sum_{a=0}^{\cardout-1}\min_{\substack{(1-\eta)\id\le\sum_xQ_x\le\id\\Q_x\ge0\;\;\forall\;x}}\sum_{x=0}^{\cardin-1}\frac{{\rm tr}\left(Q_x\sigma_{\pi_x(a)|x}\right)}{1-\eta}\nonumber\\
& =  \max_{\permu}\sum_{a=0}^{\cardout-1}\min_{\substack{(1-\eta)\id\le\sum_xQ_x\le\id\\Q_x\ge0\;\;\forall\;x}}\sum_{x=0}^{\cardin-1}\frac{(\min_yp_y){\rm tr}\left(Q_x\sigma_{\pi_x(a)|x}\right)}{\Perrorclass\left(\prob,\eta\right)}\nonumber\\
& \le 
\frac{\Perrorens({\bm\sigma},\prob,\eta)}{\Perrorstate \left(\{\averho\}_x,\prob,\eta\right)}\nonumber
\end{align}
for every $\prob>0$ and $\eta_*\le\eta<1$;
note that for non-signalling state assemblage ${\bm\sigma}$ we have $\Perrorstate \left(\{\averho\}_x,\prob,\eta\right) = \Perrorclass(\prob,\eta)$.
Finally, taking minimisation over $\prob>0$ proves the desired claim.
\end{proof}
For a measurement assemblage ${\bf E}$, the set $\{E_{a|x}/d\}_{a,x}$ is a state assemblage, and $C_{\rm POVM}({\bf E})/d = C_{\rm SE}(\{E_{a|x}/d\}_{a,x})$. Hence, using Eq.~\eqref{Eq:CSE-relation}, we obtain
\begin{acorollary}\label{CDLemma}
Let ${\bf E}$ be a measurement assemblage.
Then there exists $\eta_*<1$ such that, for every $\eta_*\le\eta<1$,
\begin{align}
(1-\eta)C_{\rm POVM}({\bf E})=\max_{\permu}\sum_{a=0}^{\cardout-1}D_\eta(\{E_{\pi_x(a)|x}\}_x).
\end{align}
\end{acorollary}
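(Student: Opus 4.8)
The plan is to obtain Corollary~\ref{CDLemma} directly from Theorem~\ref{Result:Exclusion-SE-maintext} (or rather its proof) by exploiting the fact that a measurement assemblage, after rescaling, becomes a non-signalling state assemblage. First I would observe that for a $d$-dimensional measurement assemblage ${\bf E}=\{E_{a|x}\}_{a,x}$, the operators $\sigma_{a|x}\coloneqq E_{a|x}/d$ satisfy $\sigma_{a|x}\ge0$ and $\sum_a\sigma_{a|x}=\id/d$ for \emph{every} $x$; hence $\{E_{a|x}/d\}_{a,x}$ is a valid non-signalling state assemblage (it is the assemblage induced by ${\bf E}$ acting on half of $\ket{\phi^+_{AB}}$, with marginal $\rho_B=\id/d$). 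This is the step that lets us import the ensemble-exclusion machinery.

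Next I would note the elementary homogeneity of $q_{\rm exc}$ and of $D_\eta$: both are positively homogeneous of degree one in the operators $\{N_x\}_x$, since rescaling all $N_x$ by a common factor $t>0$ rescales the feasible $P$ (resp.\ the objective $\sum_x\tr(Q_xN_x)$) by the same factor. Consequently $q_{\rm exc}(\{E_{\pi_x(a)|x}/d\}_x)=\tfrac1d q_{\rm exc}(\{E_{\pi_x(a)|x}\}_x)$ and likewise for $D_\eta$. Summing over $a$ and maximising over $\permu$ gives $C_{\rm SE}(\{E_{a|x}/d\}_{a,x})=\tfrac1d C_{\rm POVM}({\bf E})$, which is the rescaling identity quoted just before the corollary.

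The core quantitative content is then just Eq.~\eqref{Eq:CSE-relation} from the proof of Theorem~\ref{Result:Exclusion-SE-maintext}, specialised to the state assemblage $\{E_{a|x}/d\}_{a,x}$: for $\eta$ above the critical value $\eta_*\coloneqq\max_{\permu,a}\eta_*(\{\pi_x(a)\}_x)$ coming from Lemma~\ref{lemma:characterisation} applied to the finitely many index sets $\{\pi_x(a)\}_x$, one has $\max_{\permu}\sum_a D_\eta(\{E_{\pi_x(a)|x}\}_x)=(1-\eta)\max_{\permu}\sum_a q_{\rm exc}(\{E_{\pi_x(a)|x}\}_x)=(1-\eta)C_{\rm POVM}({\bf E})$, where in the first equality I use Lemma~\ref{lemma:characterisation} termwise (one invocation per $(\permu,a)$, legitimate because $\cardin,\cardout$ are finite so only finitely many distinct operator tuples occur), and in the second the definition of $C_{\rm POVM}$. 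There is essentially no obstacle here; the only point requiring a word of care is that $\eta_*$ must be chosen uniformly over all permutations and all outcomes $a$, exactly as in the proof of Theorem~\ref{Result:Exclusion-SE-maintext}, so that a single threshold works for the whole sum and for every $\permu$ simultaneously. Once that uniform choice is made, the claimed identity is immediate.

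\begin{proof}[Proof of Corollary~\ref{CDLemma}]
Since $E_{a|x}\ge0$ and $\sum_aE_{a|x}=\id$ for each $x$, the operators $E_{a|x}/d$ form a non-signalling state assemblage with marginal $\id/d$. As there are only finitely many index sets $\{\pi_x(a)\}_x$, Lemma~\ref{lemma:characterisation} gives a threshold $\eta_*\coloneqq\max_{\permu,a}\eta_*(\{\pi_x(a)\}_x)<1$ such that $(1-\eta)q_{\rm exc}(\{E_{\pi_x(a)|x}\}_x)=D_\eta(\{E_{\pi_x(a)|x}\}_x)$ for all $\eta_*\le\eta<1$, all $\permu$, and all $a$. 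Summing over $a=0,\dots,\cardout-1$ and maximising over $\permu$ yields
\begin{align}
\max_{\permu}\sum_{a=0}^{\cardout-1}D_\eta(\{E_{\pi_x(a)|x}\}_x)=(1-\eta)\max_{\permu}\sum_{a=0}^{\cardout-1}q_{\rm exc}(\{E_{\pi_x(a)|x}\}_x)=(1-\eta)C_{\rm POVM}({\bf E})\nonumber
\end{align}
for every $\eta_*\le\eta<1$, which is the claim.
\end{proof}
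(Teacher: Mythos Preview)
Your proposal is correct and follows essentially the same route as the paper: the paper derives the corollary by noting $C_{\rm POVM}({\bf E})/d=C_{\rm SE}(\{E_{a|x}/d\}_{a,x})$ and invoking Eq.~\eqref{Eq:CSE-relation}, which itself is just Lemma~\ref{lemma:characterisation} applied termwise with a uniform threshold $\eta_*=\max_{\permu,a}\eta_*(\{\pi_x(a)\}_x)$---exactly what you do directly. Your presentation is arguably cleaner since the state-assemblage detour is unnecessary (Lemma~\ref{lemma:characterisation} applies to arbitrary positive semi-definite operators), but the substance is identical.
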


\subsection*{No-go results of $\eta$-unambiguous exclusion tasks}
As mentioned in the main text, a natural question is whether one can improve the performance in $\eta$-unambiguous state-exclusion tasks if we also allow the player to do more.
The allow operations include (i) being able to make use of certain pre-measurement quantum channels $\mathcal{E}_\lambda$, each being implemented with probability $q_\lambda$, and (ii) applying post-processing of the output classical data based on another probability distribution $\widetilde{P}(x|y,\lambda)$ with $x,y=0,...,\cardin-1$.
In this setting, we further require 
\begin{align}\label{Eq:additional condition}
\sum_{x=0}^{\cardin-1}\widetilde{P}(x|y,\lambda) = 1\quad\forall\;y=0,...,\cardin-1.
\end{align}
Hence, one cannot use post-processing to increase the probability (i.e., generate the chance of inconclusion) of the outcome $\emptyset$ and artificially lower the error probability.
Equation~\eqref{Eq:additional condition} can thus be viewed as the {\em inconclusiveness-non-generating} condition.
We collectively write \mbox{$\Theta\coloneqq\left(\{q_\lambda\}_\lambda,\{\widetilde{P}(k|y,\lambda)\}_{k,y,\lambda},\{\mathcal{E}_\lambda\}_\lambda\right)$} to denote one combination of allowed operations.
For a given $\Theta$ and a set of states $\stEns = \{\rho_x\}_x$, the smallest error probability reads
\begin{align}
&\PerrorstateTheta\left(\stEns,\prob,\eta\right) \coloneqq \nonumber\\
&\min_{\substack{(1-\epsilon)\id\le\sum_xQ_x\le\id\\Q_x\ge0\;\;\forall\;x}}\sum_{x,y=0}^{\cardin-1}\sum_\lambda p_xq_\lambda\widetilde{P}(x|y,\lambda){\rm tr}[Q_y\mathcal{E}_\lambda(\rho_x)].
\end{align}
Then we have the following {\em no-go result} --- non-trivial pre-measurement channels and post-processing {\em cannot} provide any improvement in $\eta$-unambiguous state-exclusion tasks, as long as condition in Eq.~\eqref{Eq:additional condition} holds:
\begin{atheorem}\label{Result:ERE-Task}
For every set of states $\stEns$, there exists $\eta_*<1$ such that
\begin{align}\label{Eq:allowed op no-go}
\Perrorstate \left(\stEns,\prob,\eta\right)\le\PerrorstateTheta\left(\stEns,\prob,\eta\right)
\end{align}
for every $\prob>0$, \mbox{$\eta_*\le\eta<1$}, and $\Theta$.
Consequently,
\begin{align}\label{Eq:Result:ERE-Task2}
q_{\rm exc}\left(\stEns\right)
= \min_{\prob>0,\Theta}\frac{\PerrorstateTheta\left(\stEns,\prob,\eta\right)}{\Perrorclass\left(\prob,\eta\right)}\quad\forall\,\eta_*\le\eta<1.
\end{align}
\end{atheorem}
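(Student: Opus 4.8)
The plan is to reduce the claimed no-go statement to the structure already exposed in Lemma~\ref{lemma:characterisation} by showing that, whatever $\Theta = (\{q_\lambda\}_\lambda, \{\widetilde{P}(x|y,\lambda)\}_{x,y,\lambda}, \{\mathcal{E}_\lambda\}_\lambda)$ is chosen, the error probability $\PerrorstateTheta$ can always be rewritten as the error probability of a plain $\eta$-unambiguous exclusion task with a modified (but still valid) measurement strategy. The key observation is that the pre-measurement channels and post-processing can be absorbed into an effective POVM: given the original measurement $\{Q_y\}_y$ satisfying $(1-\eta)\id \le \sum_y Q_y \le \id$ and $Q_y \ge 0$, define
\begin{align}\label{Eq:absorb}
\widetilde{Q}_x \coloneqq \sum_{y=0}^{\cardin-1} \sum_\lambda q_\lambda\, \widetilde{P}(x|y,\lambda)\, \mathcal{E}_\lambda^\dagger(Q_y),
\end{align}
where $\mathcal{E}_\lambda^\dagger$ is the adjoint (Heisenberg-picture) channel. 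Since each $\mathcal{E}_\lambda^\dagger$ is positive and unital, one checks $\widetilde{Q}_x \ge 0$ and $\sum_x \widetilde{Q}_x = \sum_y \sum_\lambda q_\lambda \mathcal{E}_\lambda^\dagger(Q_y) = \sum_\lambda q_\lambda \mathcal{E}_\lambda^\dagger(\sum_y Q_y)$, using precisely the inconclusiveness-non-generating condition Eq.~\eqref{Eq:additional condition} to collapse the $\widetilde{P}(x|y,\lambda)$ sum. Unitality of $\mathcal{E}_\lambda^\dagger$ and $\sum_\lambda q_\lambda = 1$ then give $(1-\eta)\id \le \sum_x \widetilde{Q}_x \le \id$, so $\{\widetilde{Q}_x\}_x$ is a feasible strategy for the ordinary task. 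A direct computation using $\tr[Q_y \mathcal{E}_\lambda(\rho_x)] = \tr[\mathcal{E}_\lambda^\dagger(Q_y)\rho_x]$ shows $\sum_{x,y,\lambda} p_x q_\lambda \widetilde{P}(x|y,\lambda)\tr[Q_y\mathcal{E}_\lambda(\rho_x)] = \sum_x p_x \tr(\widetilde{Q}_x \rho_x)$, which is exactly the objective of $\Perrorstate$. Hence every $\Theta$-strategy is matched by an ordinary strategy with the same error, giving Eq.~\eqref{Eq:allowed op no-go}. The reverse inequality is trivial (take $\mathcal{E}_\lambda = \mathrm{id}$ and $\widetilde{P}(x|y,\lambda) = \delta_{x,y}$), so in fact equality holds.

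Once Eq.~\eqref{Eq:allowed op no-go} is established (with $\eta_*$ the same threshold as in Theorem~\ref{Result:ERE-Task-maintext}, since that governs the validity of Lemma~\ref{lemma:characterisation} for the relevant operators), the second claim Eq.~\eqref{Eq:Result:ERE-Task2} follows immediately: the chain
\begin{align}\label{Eq:sandwich}
q_{\rm exc}(\stEns) = \min_{\prob>0}\frac{\Perrorstate(\stEns,\prob,\eta)}{\Perrorclass(\prob,\eta)} \le \min_{\prob>0,\Theta}\frac{\PerrorstateTheta(\stEns,\prob,\eta)}{\Perrorclass(\prob,\eta)} \le \frac{\Perrorstate(\stEns,\prob_0,\eta)}{\Perrorclass(\prob_0,\eta)}
\end{align}
holds for any fixed feasible $\prob_0 > 0$ (using $\Theta = \mathrm{trivial}$ in the inner min of the last term), and taking $\prob_0$ to be the optimiser of Theorem~\ref{Result:ERE-Task-maintext} closes the sandwich. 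Equality throughout then gives Eq.~\eqref{Eq:Result:ERE-Task2}, with the uniform distribution $\{p_x = 1/\cardin\}_x$ again attaining the minimum.

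The main obstacle — or rather the step requiring the most care — is verifying that the operator-sum Eq.~\eqref{Eq:absorb} genuinely lands in the feasible set of the ordinary task, and this hinges entirely on two facts working in tandem: unitality of the adjoint channels $\mathcal{E}_\lambda^\dagger$ (which holds because the $\mathcal{E}_\lambda$ are trace-preserving) and the normalisation Eq.~\eqref{Eq:additional condition}. Without the latter, the post-processing could siphon probability weight away from the $\{0,\ldots,\cardin-1\}$ outcomes into the inconclusive outcome $\emptyset$, breaking the lower bound $\sum_x \widetilde{Q}_x \ge (1-\eta)\id$ and allowing a spurious reduction of the error; the theorem statement builds in Eq.~\eqref{Eq:additional condition} precisely to forbid this. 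One should state explicitly in the proof that it is the inconclusiveness-non-generating condition that makes the lower-bound constraint survive the reduction, since that is the conceptual crux rather than any hard estimate.
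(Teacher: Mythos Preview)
Your proposal is correct and follows essentially the same approach as the paper: you define the absorbed operators $\widetilde{Q}_x$ (the paper calls them $L_x^{(\Theta)}$), use the inconclusiveness-non-generating condition Eq.~\eqref{Eq:additional condition} together with unitality of $\mathcal{E}_\lambda^\dagger$ to verify feasibility, and deduce Eq.~\eqref{Eq:allowed op no-go}, after which Eq.~\eqref{Eq:Result:ERE-Task2} follows from Theorem~\ref{Result:ERE-Task-maintext}. Your exposition is slightly more detailed (you spell out the sandwich in Eq.~\eqref{Eq:sandwich} and note that the trivial $\Theta$ gives the reverse inequality), but the substance is identical to the paper's argument.
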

\begin{proof}
It suffices to prove Eq.~\eqref{Eq:allowed op no-go}, since Eq.~\eqref{Eq:Result:ERE-Task2} is its direct consequence.
Consider a given collection of allowed operations $\Theta$.
For every $\prob>0$ and $\eta_*\le\eta<1$ (where $\eta_*$ is given by Lemma~\ref{lemma:characterisation}), we have
\begin{align}
\PerrorstateTheta\left(\stEns,\prob,\eta\right)=\min_{\substack{(1-\eta)\id\le\sum_xQ_x\le\id\\Q_x\ge0\;\;\forall\;x}}\sum_{x=0}^{\cardin-1}p_x{\rm tr}\left(L_x^{(\Theta)}\rho_x\right),\nonumber
\end{align}
where, for every $x=0,..,\cardin-1$,
\begin{align}\label{Eq:L-func}
L_x^{(\Theta)}\coloneqq\sum_{y=0}^{\cardin-1}\sum_\lambda q_\lambda\widetilde{P}(x|y,\lambda)\mathcal{E}_\lambda^\dagger(Q_y)\ge0.
\end{align}
Using Eq.~\eqref{Eq:additional condition}, we have
$
(1-\eta)\id\le\sum_{x=0}^{\cardin-1}L_x^{(\Theta)} = \sum_\lambda q_\lambda\mathcal{E}_\lambda^\dagger\left(\sum_{y=0}^{\cardin-1}Q_y\right)\le\sum_\lambda q_\lambda\mathcal{E}_\lambda^\dagger(\id) = \id.
$
Consequently, $\{L_x^{(\Theta)}\}_{x=0}^{\cardin-1}$ is a feasible point of Eq.~\eqref{Eq:P-error}; i.e., $\Perrorstate \left(\stEns,\prob,\eta\right)$.
This implies Eq.~\eqref{Eq:allowed op no-go}.
\end{proof}
As a direct corollary, inconclusiveness-non-generating allowed operations can neither improve state-ensemble exclusion.
To be precise, for a given $\Theta$ and a state assemblage ${\bm\sigma}$, the smallest error probability can be written as
\begin{multline}\label{Eq:P-error2}
\PerrorensTheta\left({\bm\sigma},\prob,\eta\right) \coloneqq\\
\max_{\permu}\sum_m p'_m \PerrorstateTheta\left(\{\rho_{\pi_x^{-1}(m)|x}\}_x,\{p'_{x|m}\}_x,\eta\right),
\end{multline}
where $\{\rho_{\pi_x^{-1}(m)|x}\}_x$, $p'_{x|m}$, and $p'_m$ are defined below Eq.~\eqref{Eq:p-ensemble}.
Using Eq.~\eqref{Eq:allowed op no-go} and the definition of $\Perrorens$, it is straightforward to see that, for every non-signalling state assemblage ${\bm\sigma}$,
there exists $\eta_*<1$ such that
\begin{align}
\Perrorens\left({\bm\sigma},\prob,\eta\right)\le\PerrorensTheta\left({\bm\sigma},\prob,\eta\right)\end{align}
 for every $\prob>0$, $\eta_*\le\eta<1$, and $\Theta$.
Hence, we again have
\begin{align}
C_{\rm SE}({\bm\sigma}) = \min_{\prob>0,\Theta}\frac{\PerrorensTheta\left({\bm\sigma},\prob,\eta\right)}{\Perrorclass\left(\prob,\eta\right)}.
\end{align}

\subsection*{Proof of Theorem~\ref{Result:ExclusionTask}}
\begin{proof} 
Consider a given $\rho_{AB}$ with full-rank marginals.
By Eq.~\eqref{Eq:CSE-relation} and Lemma~\ref{lemma:characterisation}, there exists \mbox{$\eta_{(1)}<1$} such that, for every $\eta_{(1)}\le\eta<1$, we have
(again, $\probunif\coloneqq\{p_x = 1/\cardin\}_x$)
\begin{align}\label{Eq:Comp_001}
&C_{\rm SE}\left[{\bm\sigma}({\bf E},\rho_{AB})\right]=\frac{\cardin}{1-\eta}\times\Perrorens\left[{\bm\sigma}({\bf E},\rho_{AB}),\probunif,\eta\right]\nonumber\\
& = \max_{\permu}\sum_{a=0}^{\cardout-1}\min_{\substack{(1-\eta)\id\le \sum_xQ_x\le\id\\ Q_x\ge0}}\sum_{x=0}^{\cardin-1}\frac{{\rm tr}\left(E_{\pi_x(a)|x}\tau_x\right)}{1-\eta}.
\end{align}
Here, we let
$ 
\tau_x\coloneqq{\rm tr}_B\left[(\id_A\otimes Q_x)\rho_{AB}\right]
$,
which satisfies
$
(1~-~\eta)~\mu_{\rm min}~(\rho_{A})\id_A\le(1-\eta)\rho_A\le\sum_{x=0}^{\cardin-1}\tau_x \le \rho_A\le\id_A.
$
Define
\mbox{$
\delta_\eta\coloneqq1-(1-\eta)\mu_{\rm min}(\rho_A).
$}
Using Eqs.~\eqref{Eq:Comp_001} and~\eqref{Eq:D-function}, we have, for every \mbox{$\eta_{(1)}\le\eta<1$,}
\begin{align}
(1-\eta)C_{\rm SE}\left[{\bm\sigma}({\bf E},\rho_{AB})\right]\ge\max_{\permu}\sum_{a=0}^{\cardout-1}D_{\delta_\eta}(\{E_{\pi_x(a)|x}\}_x).\nonumber
\end{align}
Now, by Corollary~\ref{CDLemma}, there exists \mbox{$\eta_{(2)}<1$} such that, for every $\eta_{(2)}\le\eta<1$,
\begin{align}\label{Eq:Comp-000}
(1-\eta)C_{\rm POVM}({\bf E})=\max_{\permu}\sum_{a=0}^{\cardout-1}D_\eta(\{E_{\pi_x(a)|x}\}_x).
\end{align}
Finally, when $\eta\ge1-\left(1-\eta_{(2)}\right)/\mu_{\rm min}(\rho_A)$, we have \mbox{$\delta_\eta\ge\eta_*$,} meaning that Eq.~\eqref{Eq:Comp-000} is working for $\delta_\eta$.
Altogether, for every $\max\left\{\eta_{(1)},\eta_{(2)},1-\left(1-\eta_{(2)}\right)/\mu_{\rm min}(\rho_A)\right\}\le\eta<1$, we have
$
(1-\eta)C_{\rm SE}\left[{\bm\sigma}({\bf E},\rho_{AB})\right]\ge(1-\delta_\eta)C_{\rm POVM}({\bf E})
$, which implies the desired upper bound.

To saturate the upper bound, consider $\rho_{AB} = \phi^+_{AB}\coloneqq\proj{\phi^+_{AB}}$.
Using Eq.~\eqref{Eq:CSE-relation} and Corollary~\ref{CDLemma}, there is $\eta_*<1$ such that, for every \mbox{$\eta_*\le\eta<1$,}
\begin{align}
&C_{\rm SE}\left[{\bm\sigma}({\bf E},\phi^+_{AB})\right]=\nonumber\\
&\max_{\permu}\sum_{a=0}^{\cardout-1}\min_{\substack{(1-\eta)\id\le \sum_xQ_x\le\id\\ Q_x\ge0}}\sum_x\frac{{\rm tr}\left[\left(E_{\pi_x(a)|x}Q_x^\intercal\otimes\id_B \right)\phi^+_{AB}\right]}{1-\eta}\nonumber\\
&=\max_{\permu}\sum_{a=0}^{\cardout-1}\min_{\substack{(1-\eta)\id\le \sum_xQ_x\le\id\\ Q_x\ge0}}\frac{1}{d}\sum_x\frac{{\rm tr}\left(E_{\pi_x(a)|x}Q_x\right)}{1-\eta}\nonumber\\
&=\mu_{\rm min}(\phi^+_{AB})C_{\rm POVM}({\bf E}).\nonumber
\end{align}
Here, $(\cdot)^\intercal$ is the transpose map, and we have used the fact that $(\id\otimes M)\ket{\phi^+_{AB}} = (M^\intercal\otimes\id)\ket{\phi^+_{AB}}$ for every operator $M$ and $\mu_{\rm min}(\phi^+_{AB}) = 1/d$.
\end{proof}

\subsection*{Proof of Theorem~\ref{Result:Encryption}}
\begin{proof}
Using Corollary~\ref{CDLemma}, there exists $\eta_*<1$ such that
\begin{align}
\Perrorencrypt({\bf E},\eta)& = \max_{\permu}\sum_{m=0}^{\cardout-1}\min_{\substack{(1-\eta)\id\le Q_0+Q_1\le\id\\ Q_0\ge0,Q_1\ge0}}\sum_{x=0,1}\frac{{\rm tr}\left(E_{\pi_x(a)|x} Q_x\right)}{2d}\nonumber\\
&=\frac{1-\eta}{2d}C_{\rm POVM}({\bf E})\quad\forall\,\eta_*\le\eta<1,\label{Eq:Computation01}
\end{align}
where we have used Eq.~\eqref{Eq:p-error} in the first line.
\end{proof}

\subsection*{Proof of Eq.~\eqref{Eq:Result:No-Go}: A no-go result for deterministic unambiguous $\eta$-encryption tasks}
\begin{proof}
Suppose $\Perrorencrypt({\bf E},\eta=0) = 0.$
Let us pick one index $a_*$ satisfying $E_{a_*|0}\neq0$.
Then Eq.~\eqref{Eq:Computation01} implies that, for every $b$, there exists $0\le P_{b}\le\id$ such that
$
{\rm tr}\left(E_{a_*|0}P_{b}\right) + {\rm tr}\left[E_{b|1}(\id-P_{b})\right] = 0.
$
Thus,
$
{\rm tr}\left(\eta_{b}P_{b}\right) ={ \rm tr}\left[\rho_{a_*}(\id-P_{b})\right] = 1$ $\forall\;b\in\mathcal{I}_+\coloneqq\{b\;|\;E_{b|1}\neq0\},
$
where 
$\rho_{a_*}\coloneqq E_{a_*|0}/{\rm tr}\left(E_{a_*|0}\right)$ and $\eta_{b}\coloneqq E_{b|1}/{\rm tr}\left(E_{b|1}\right)$ are quantum states.
From the operational interpretation of trace distance, we recall that
$
2 = \max_{0\le P\le\id}\left[{\rm tr}(\rho P) + {\rm tr}\left(\eta(\id-P)\right)\right]
$
if and only if two states $\rho,\eta$ have orthogonal supports.
This means
\mbox{$
{\rm supp}\left(E_{a_*|0}\right)\perp{\rm supp}\left(E_{b|1}\right)$} $\forall\;b\in\mathcal{I}_+;
$
i.e.,
$
E_{b|1}\ket{\psi} = 0$ $\forall\;b\in\mathcal{I}_+ $ and $\forall\;\ket{\psi}\in{\rm supp}\left(E_{a_*|0}\right).$
From here we observe that
$\ket{\psi} = \id\ket{\psi} = \sum_{b\in\mathcal{I}_+} E_{b|1}\ket{\psi} = 0$ $\forall\;\ket{\psi}\in{\rm supp}\left(E_{a_*|0}\right),$
which implies $E_{a_*|0} = 0$, a contradiction.
\end{proof}

\subsection*{Computation details for the example}
Consider a pair of two-outcome measurements given by
$
E_{0|0} = \proj{0}, E_{1|0}=\proj{1}, E_{0|1} = \proj{+}, E_{1|1}=\proj{-}.
$
Using Eq.~\eqref{Eq:Computation01}, zero error probability with the encrypt strategy $\permu = \{\pi_0,\pi_1\}$ can be achieved by choosing
\begin{align}
&Q_{x_{\rm out}=1|\pi_{x=0}(a=0)}^\intercal=\alpha\proj{1}; Q_{x_{\rm out}=1|\pi_{x=0}(a=1)}^\intercal=\alpha\proj{0};\nonumber\\
&Q_{x_{\rm out}=0|\pi_{x=1}(a=0)}^\intercal=\alpha\proj{-}; Q_{x_{\rm out}=0|\pi_{x=1}(a=1)}^\intercal=\alpha\proj{+},\nonumber
\end{align}
where, for simplicity, we consider a single proportional constant $\alpha$.
Now we analyse the success probability.
From the definition of inconclusiveness $\eta$, we have
$
(1-\eta)\id\le\alpha\left(\proj{0}+\proj{+}\right)\le\id
$.
This implies $
\alpha\left(1+1/\sqrt{2}\right)\le1$ and $1-\eta\le\alpha\left(1-1/\sqrt{2}\right).
$
Note that replacing $\proj{0}$ by $\proj{1}$ and/or replacing $\proj{+}$ by $\proj{-}$ gives the same constraints on $\alpha$, so it suffices to analyse the above case.
The highest value of $\alpha$ reads
$
\alpha_{\rm max} = \sqrt{2}/\left(1+\sqrt{2}\right).
$
This means that zero error probability can be achieved for every $\eta_{\rm min}\le\eta<1$, where
$
\eta_{\rm min} = 1-\alpha_{\rm max}\left(1-1/\sqrt{2}\right) = 2/\left(1+\sqrt{2}\right).
$
Finally, when we choose $\eta=\eta_{\rm min}$ and $\alpha = \alpha_{\rm max}$, one gets the decoding strategy given in the main text. 
The success probability 
reads $\sum_{m,x=0,1}\frac{1}{2}\bra{{\phi_{AB}^+}}\left(E_{\pi^{-1}_{x}(m)|x}\otimes Q_{x|m}\right)\ket{\phi_{AB}^+} = \alpha_{\rm max}/2\approx0.2929.$

\subsection*{Proof of Theorem~\ref{Result:IncompatibilityWeight}}
\begin{proof}
A compatible measurement assemblage \mbox{${\bf L} = \{L_{a|x}\}_{a,x}$} can be written as $L_{a|x} = \sum_{i}D(a|x,i)G_i$ $\forall\;a,x$,
where $\{G_i\}_i$ is a POVM, and each $D(a|x,i)$ is a deterministic probability distribution; namely, it assigns each input $x$ with precisely one outcome $a$~\cite{SDP-textbook}.
Hence, we can write 
$
1-{\rm IW}({\bf E})=\max\{0\le q\le1\;|\;E_{a|x}\ge q \sum_i D(a|x,i) G_i\;\forall\;a,x; \{G_i\}_i\text{: POVM}\}.
$
By considering the variable $K_i\coloneqq qG_i$ and extending the maximisation range, $1-{\rm IW}({\bf E})$ is upper bounded by
\begin{eqnarray}
\begin{aligned}
\max\quad&\frac{1}{d}{\rm tr}\left(\sum_i K_i\right)\\
{\rm s.t.}\quad&E_{a|x}\ge\sum_i D(a|x,i) K_i\;\;\forall\;a,x; \\
&K_i\ge0; \sum_i K_i=\frac{\id}{d}{\rm tr}\left(\sum_i K_i\right).\nonumber
\end{aligned}
\end{eqnarray}
Write ${\bm b} = (b_0,b_1,...,b_{\cardin-1})$, and use $i_{\bm b}$ denote the deterministic probability distribution that maps $x$ to $b_x$ for every $x=0,...,\cardin-1$; that is, $D(a=b_x|x,i_{\bm b})=1$ $\forall\;x$.
Then
$
\sum_i {\rm tr}\left(K_i\right) = \sum_{\bm b} {\rm tr}\left(K_{i_{\bm b}}\right),
$
and $1-{\rm IW}({\bf E})$ is upper bounded by
\begin{align}\nonumber
\sum_{\bm b}\max\left\{{\rm tr}\left(K_{i_{\bm b}}\right)/d\;\middle|\;E_{b_x|x}\ge\sum_i D(b_x|x,i) K_i\;\forall\,x;\; K_i\ge0\right\}
\end{align}
where we again extend the maximisation range.
Finally, since
$
\sum_i D(b_x|x,i) K_i=\sum_{i: D(b_x|x,i) = 1}K_i\ge K_{i_{\bm b}}
$
and $\sum_{\bm b} = \cardout^\cardin$, the above maximisation is upper bounded by $\cardout^\cardin C_{\rm POVM}/d$.
\end{proof}

\bibliography{Ref.bib}

\end{document}